\documentclass[12pt,twoside]{article}
\usepackage[utf8]{inputenc}

\usepackage{amsmath,graphicx,microtype,amssymb,setspace,verbatim,multirow,url,bbm,rotating,booktabs,amsthm,subcaption}
\usepackage{libertine}
\usepackage{color} 
\usepackage{sgame}
\usepackage{fnpct}

\usepackage{xcolor}

\newif\ifshownotes
\shownotesfalse   

\makeatletter
\def\JL@red{red}
\NewCommandCopy{\JL@textcolor}{\textcolor}
\newcommand{\JL@tc}[3]{%
  \IfValueTF{#1}{\JL@textcolor[#1]{#2}{#3}}{\JL@textcolor{#2}{#3}}}
\RenewDocumentCommand{\textcolor}{o m m}{%
  \def\JL@c{#2}%
  \ifx\JL@c\JL@red
    \ifshownotes\JL@tc{#1}{#2}{#3}\fi
  \else
    \JL@tc{#1}{#2}{#3}%
  \fi}
\makeatother
\makeatletter
	\renewcommand{\abstract}[1]{\def \@abstract {#1}}
	\newcommand{\jelcodes}[1]{\def \@jelcodes {#1}}
	\newcommand{\keywords}[1]{\def \@keywords {#1}}
	\newcommand{\thanknotes}[1]{\def \@thanknotes {#1}}
	\newcommand{\contact}[1]{\def \@contact {#1}}
	\newcommand{\shortauthor}[1]{\def \@shortauthor {#1}}
	\newcommand{\shorttitle}[1]{\def \@shorttitle {#1}}
\makeatother

\jelcodes{}
\keywords{Belief disagreement, buyer-optimal learning, robustness, KL divergence}
\thanknotes{I thank Odilon C\^amara and Kevin He for helpful discussions. I used ChatGPT (5.6 Sol) for research assistance, which included exploratory proof development and simplification, algebraic and numerical checks, literature searches, and language editing. I used Claude Opus 5 for proofreading. All arguments and exposition were selected, developed, and verified by me; I take full responsibility for the final manuscript and any errors. }
\shortauthor{Jonathan Libgober}
\shorttitle{Disagreement}
\abstract{I introduce a robustness criterion for settings where players may disagree about the distribution over signals induced by endogenously chosen information. In the spirit of \cite{hansensargent}, an information choice is evaluated by its worst-case payoff over beliefs within an $\eta$-entropy ball around the signal distribution it induces. I apply this formulation to buyer-optimal learning in bilateral trade \citep{RS}. If trade is always efficient, disagreement raises the price and twists the buyer's demand curve while preserving full trade. When trade may be inefficient, the same characterization holds conditional on trade with an endogenously amplified disagreement level.}

\usepackage{enumitem}
\usepackage[paperwidth=8.5in, paperheight=11in]{geometry}
\usepackage{setspace}
\usepackage[hang,flushmargin]{footmisc} 
\usepackage[authoryear]{natbib}

\makeatletter
\usepackage{fancyhdr}
\makeatother

\newcommand\blfootnote[1]{%
  \begingroup
  \renewcommand\thefootnote{}\footnote{#1}%
  \addtocounter{footnote}{-1}%
  \endgroup
}

\usepackage{authblk}
\makeatletter
\def \maketitle { 
	\thispagestyle{empty}
	\vspace*{0.1in}
	\blfootnote{Contact: libgober.economics@gmail.com.  \@thanknotes}
	\begin{center}
	\begin{minipage}{5.2in}
	\begin{center}
	{\large {\textbf{\@title}}}
	
	\vspace{0.2in}
	
	{\textsc{\@author}}
	
	\vspace{0.2in}
	
	{\@date}
	\end{center}
	
	\ifx\@abstract\@empty
	\relax
	\else
	{\small{\textsc{Abstract.} \@abstract}}
	\fi
	
	\ifx\@keywords\@empty
	\relax
	\else
	\vspace{0.2in}
	
	{\small\textsc{Keywords.} \@keywords.}
	\fi
	
	\ifx\@jelcodes\@empty
	\relax
	\else
	{\small\textsc{JEL Codes.} \@jelcodes.}
	\fi
	
	\end{minipage}
	\end{center} }
\makeatother

\makeatletter
\def\@seccntformat#1{\csname the#1\endcsname.\ }
\makeatother

\usepackage{sectsty}

\allsectionsfont{\noindent\normalsize}
\sectionfont{\centering\normalsize}
\paragraphfont{\textnormal}

\DeclareMathOperator*{\argmax}{arg\,max}
\DeclareMathOperator*{\argmin}{arg\,min}

\providecommand{\U}[1]{\protect\rule{.1in}{.1in}}

\newtheorem*{theorem}{Theorem}

\newtheorem*{claim}{Claim}

\newtheorem*{corollary}{Corollary}

\newtheorem{lemma}{Lemma}

\newtheorem*{proposition}{Proposition}

\newtheorem*{definition}{Definition}

\theoremstyle{definition}
\newtheorem{example}{Example}

\usepackage{overpic}
\usepackage{hyperref}

\usepackage{tabu}

\makeatletter
\renewcommand\@biblabel[1]{}
\makeatother

\title{Disagreeing About What the Buyer Might Learn}
\author{Jonathan Libgober \\ University of Southern California}
\date{September 24, 2026}

\begin{document}
\maketitle
\newpage

\section{Introduction}
This paper considers the problem a buyer faces when deciding what to learn about her value for a product, anticipating that the seller may hold a different belief about the demand curve induced by this information. I have in mind that the interaction is one where the buyer decides freely which information sources to consult privately when learning about her willingness to pay. While the seller knows what the buyer chose, he might nevertheless disagree about which signals are likely to be generated; for instance, he may believe that some favorable signals are more likely than the buyer expects. I study how the prospect of belief disagreement influences the choice of information itself.

I adopt a robustness approach in the spirit of \cite{hansensargent} in order to discipline belief disagreement in a way that generates sharp predictions. \cite{morris1995common} identifies the need for such discipline as one
of the strongest arguments for the common prior assumption, despite the apparent ubiquity of ``agreeing to disagree'' that this assumption rules out. This concern is especially salient when information \emph{itself} is endogenous, as in my setting. I posit that, while the buyer expects that her posterior expected value will follow a distribution $G$, the seller may perceive a different distribution, say $H$, when selecting the price. Anticipating this, the buyer chooses $G$ to maximize her worst-case expected payoff over seller beliefs in the set:
\begin{equation*}
\{H : D_{KL}(H||G) \leq \eta\},
\end{equation*}

\noindent where $D_{KL}$ denotes the Kullback-Leibler divergence. This criterion seeks a favorable payoff guarantee no matter which admissible seller belief will arise. The parameter $\eta$ dictates how much beliefs might differ, while $G$ influences which kinds of disagreements are feasible. At one extreme, the seller cannot think the buyer learned something if the buyer learns nothing. 

When $\eta=0$, my model coincides with that of \cite{RS}. In this model, the buyer can costlessly choose any Blackwell experiment to privately learn about her willingness-to-pay before the seller selects a price. When gains from trade are always positive, buyer-optimal learning induces trade with probability one. Hence, information influences the surplus division through price, without distorting the allocation. This favorable price is sustained by an equal-revenue distribution making the seller indifferent among all prices in its support. 

The \cite{RS} benchmark provides an incisive starting point for the analysis of belief disagreement. The reason is that the buyer benefits from choosing information by influencing the \emph{seller's perception} of demand. Disagreement therefore concerns the very distribution the buyer chooses to influence pricing. Since exact seller indifferences support the equilibrium price, an arbitrarily small disagreement that places relatively more probability on high signals can make a higher price strictly optimal. Such changes can eliminate buyer surplus entirely in the Roesler-Szentes information structure. The central issue is how the buyer should redesign information to protect herself when that choice determines both the demand she induces and the set of possible seller perceptions of that demand. 

When gains from trade are always positive, disagreement influences information choice as follows. First, the price the buyer faces is higher. Second, the distribution of expected willingness-to-pay is ``twisted'' in a way that depends on disagreement. This twisting arises from an indifference condition for the seller between the target price and any higher price. The distinction with \cite{RS} emerges because this indifference might be enforced by a seller belief that depends on the deviation. Typically, no single $H$ will make the seller indifferent among all deviations. For each deviation, the KL constraint determines which seller belief makes it most attractive. In the equilibrium outcome, the lowest price in the support of the buyer's distribution is selected for any admissible $H$, including if $H=G$. Full trade emerges in equilibrium, just as in the no-disagreement case. 

Despite the main message being that disagreement essentially leads to a twisting of the no-disagreement solution, the proof strategy is rather different from \cite{RS}. The idea in \cite{RS} is to replace an arbitrary buyer choice with one that induces full trade and the same seller profit. While the \emph{intuition} from \cite{RS} can motivate some parts of the analysis, fundamentally it may be impossible to replace an arbitrary outcome with another that induces full trade and the same seller profit (I present an example in the Supplemental Appendix). Although disagreement can generate outcomes with lower seller profit, my theorem shows that moving away from the full-trade optimum cannot convert this into higher buyer surplus. My strategy is to instead consider an auxiliary problem where the buyer chooses $G$ to minimize the maximum profit the seller can achieve over any $H$. I then show that the original model and the auxiliary model have the same solution by showing separately that no other $G$ can raise buyer surplus further. 

If trade is potentially inefficient, then the buyer uses a similar information structure which possibly introduces a single signal where trade does not occur. Outside of that signal, information resembles the efficient-trade solution but with \emph{exacerbated disagreement}. The intuition is that Nature can lower the perceived probability of trade whenever trade occurs with probability less than one, to sustain more disagreement conditional on trade. I characterize this amplification exactly; in the worst-case, the buyer and seller ``agree to disagree'' about the probability of trade.

\subsection{Related Literature}

This paper applies a robustness criterion, motivated by prospective belief disagreement, to an information-design problem. Perhaps the closest precedent comes from work on persuasion studying the implications of belief disagreement over \emph{initial priors}. \cite{alonsocamara} study persuasion with heterogeneous priors with a commonly understood experiment. \cite{kosterina2022} considers a similar problem but assuming worst-case priors; while the feasible set of receiver priors is fixed, the worst-case belief may still vary with information. My worst-case criterion concerns disagreement over the \emph{signal distribution} induced by an information structure. In this sense the exercise is related to \cite{dworczakpavan2022}, studying a model where the sender is uncertain about the receiver's additional information rather than belief per se. 

In the pricing setting, work studying robustness has largely focused on seller-side  uncertainty \citep{du2018,brooksdu2021}; \cite{debroesler} relate buyer-optimal and seller-pessimal information in the context of multidimensional screening. Meanwhile, \cite{langwasser} study a \emph{buyer} with $\alpha$-maxmin preferences who may face an ambiguous information structure.  My objective differs since the worst-case is with respect to disagreement regarding likely signal realizations. My buyer is Bayesian over her own information. 

This paper also joins a literature asking which features of buyer-optimal learning change in environments beyond \cite{RS}. \cite{RavidRS} studies costly information acquisition unobserved by the seller, while \cite{condorelliszentes} allows the buyer to choose the value distribution itself. Other work studies initial private information \citep{chopraely} or subsequent seller disclosure \citep{TerstiegeWasser}. This setting is a natural laboratory to study the impact of worst-case belief disagreement. Seller indifference is replaced by distribution-specific indifference, with the entropy constraint directly influencing how the information choice is transformed.

\section{Model} \textcolor{red}{Supposed to allow the buyer and seller to agree to disagree, and this is maybe not as clearly written since the seller observes $\pi$; this should be something like the seller sees what the buyer thinks but ultimately thinks it's something else. Here it doesn't really matter what the actual information is since the seller could always believe that the buyer got no information and the signal distribution was constant, so there is this form of indeterminacy.}

\noindent A seller with production cost $c  \in [0,1]$ seeks to sell one unit of a good to a buyer with value $v \sim F$, where $F$ is a continuous distribution with support $[0,1]$ and mean $\mu$. The buyer's information about $v$ is determined by an information structure  $\mathcal{I} : [0,1] \rightarrow \Delta(S)$ which she chooses and whose realization $s \in S$ is privately observed. After the buyer chooses an information structure, the seller chooses a price to maximize his perceived expected profit. I normalize signal realizations so that:\footnote{This is without loss; if multiple signals can be generated with the same posterior expectation, Nature can induce the same anticipated demand curve compared to the case where signals are collapsed by matching the distribution over signals with equal posterior expectations as specified by $\mathcal{I}$.}  

\begin{equation*} 
s= \mathbb{E}_{v \sim F, s \sim \mathcal{I}}[v \mid s]
\end{equation*}

\noindent Let $G$ denote the distribution over posterior expectations $s$ induced by $\mathcal{I}$; throughout the paper I suppress the implicit dependence on $\mathcal{I}$. A distribution over posterior expectations is \emph{Bayes plausible} with respect to the prior $F$ if it can be induced by some information structure, which holds if and only if for all $x$, 
\begin{equation*} 
\int_{0}^{x} G(s)ds \leq \int_{0}^{x} F(s) ds, 
\end{equation*}
with $\int_{0}^{1} sdG(s)=\mu$. 

\cite{RS} study this model when the buyer's information structure is \emph{common knowledge}. By contrast, I consider a buyer concerned that even though the seller observes the buyer's choice, he might nevertheless ``agree to disagree'' about the distribution over $s$ induced by that choice. Let $D_{KL}(f || g)$ denote the Kullback-Leibler divergence between distributions $f$ and $g$.\footnote{That is, $D_{KL}(H||G) = \int \log(dH/dG)dH$ when $H$ is absolutely continuous with respect to $G$ and $\infty$ otherwise.} If the seller anticipates the distribution over the buyer's signals is $H$, then he selects an optimal price according to $H$, breaking ties in favor of the buyer, and assuming the buyer will buy if indifferent:\footnote{I discuss the role of tiebreaking in Section \ref{sec:extensions}.} 

\begin{equation*} 
p^{*}(H) = \min \argmax_{p \in [c,1]} (p-c) \mathbb{P}_{H}[s \geq p],
\end{equation*}

\noindent The buyer then selects $G$ to maximize payoffs against the worst-case $H$ whose KL-divergence with $G$ is no more than $\eta$: 

\begin{equation*} 
\inf_{H : D_{KL}(H||G) \leq \eta} ~~~\int_{p^{*}(H)}^{1} (s- p^{*}(H))dG(s). 
\end{equation*}

\noindent The \cite{RS} setting emerges when $\eta=0$. I emphasize that $H$ is a distribution over signals labeled by the buyer's posterior expectation, and \emph{not} necessarily the seller's. Accordingly, $H$ need not be Bayes plausible with respect to $F$ even though $G$ must be.

\subsection{Discussion} \label{sec:nodisag}
When $\eta=0$, there is no disagreement about the signal distribution. When $c=0$, \cite{RS} showed the buyer optimally chooses: 

\begin{equation} 
G(s) = \begin{cases} 0 & 0 \leq s <q \\ 1- \frac{q}{s}  & q \leq s < B \\ 1 & B \leq s \leq 1\end{cases}.\label{eq:RSInfo}
\end{equation}

\noindent This distribution is continuous at $q$ but has an atom at $s=B$. The seller is indifferent between all prices in the support of $G$, and charges a price equal to the minimum in the support. \cite{RS} use a clever replacement argument to prove this: Any outcome achieving some buyer payoff-seller profit pair can be replaced by one of the form (\ref{eq:RSInfo}) yielding full trade and equal seller profit. Thus, any efficiency gains go to the buyer. However, the distribution $H_{\varepsilon}=(1-\varepsilon)G+\varepsilon \delta_{B}$ (where $\delta_{x}$ is a point mass at $x$) has $D_{KL}(H_{\varepsilon}||G) < \eta$ for $\varepsilon$ sufficiently small, and for every $\varepsilon > 0$ the seller \emph{strictly} prefers $p=B$---yielding zero buyer surplus. Still, not all choices are \emph{so} sensitive to disagreement; if an interior price is uniquely strictly optimal, it may be impossible to induce a high price with only a small change in the perceived signal distribution.

\section{Main Result and Proof Idea} 

To understand the form of the buyer's information structure choice, I first consider how the buyer can ensure a full-trade outcome, say at a price of $r$. Since the seller should find it weakly optimal to set a price of $r$, deterring a deviation to $p > r$ requires that $p \cdot \mathbb{P}_{H}[s \geq p] \leq r$ for any admissible $H$. But since the buyer chooses $G$ and not $H$, the question is how to translate this restriction on $H$ to a restriction on $G$. To do this, define:

\begin{equation*}
d_{KL}(x||y)=x\log\left(\frac{x}{y}\right)
+(1-x)\log\left(\frac{1-x}{1-y}\right)
\end{equation*}

\noindent to be the KL-divergence for two Bernoulli distributions assigning probabilities $x$ and $y$, respectively, to the same event. 

\begin{definition}\label{defn}
The \emph{$\tilde{\eta}$-disagreement inversion at $x\in (0,1)$} is the value $I_{\tilde{\eta}}(x) \in [0,1]$ which maps $x$ into the unique value weakly below $x$ satisfying:

\begin{equation*}
d_{KL}(x||I_{\tilde{\eta}}(x))=\tilde{\eta}.
\end{equation*}

\noindent I use the convention that $I_{\tilde{\eta}}(0)=0$ and $I_{\tilde{\eta}}(1)=e^{-\tilde{\eta}}$. 

\end{definition} 

\noindent As the proof below shows, $I_{\eta}(x)$ is the pivotal object because if $\mathbb{P}_{G}[s \geq p]= I_{\eta}(x)$ then $x$ is the largest probability an admissible $H$ can assign to the event $\{s \geq p\}$. Thus, ensuring the seller sets $r$ instead of $p > r$ requires $\mathbb{P}_{G}[s \geq p] \leq I_{\eta}(r/p)$. In fact, this inequality is tight: 

\begin{theorem}[$c=0$]\label{thm}
Buyer-optimal information induces trade with probability one and, for some $r^{*} \leq B^{*}$, can be chosen so that: 

\begin{equation*} 
\mathbb{P}_{G}[s \geq p] = I_{\eta}(r^{*}/p),  ~~~~ r^{*} < p \leq B^{*}, 
\end{equation*}

\noindent with support $[r^{*}, B^{*}]$. The seller charges $r^{*}$ under every admissible belief. 
\end{theorem}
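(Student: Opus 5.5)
The proof has two parts. First I characterize, for any $G$, the worst-case seller perception of a single upper tail. Then I use it to solve an auxiliary min–max problem, and show that the auxiliary solution also solves the buyer's problem.

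\emph{Step 1 (tail lemma).} Fix $G$ and $p$, and let $y=\mathbb{P}_G[s\ge p]$. I claim
\[
\sup_{H: D_{KL}(H\|G)\le\eta}\mathbb{P}_H[s\ge p]=x,
\]
where $x$ is the largest value with $d_{KL}(x\|y)\le\eta$ (and $x=1$ if $y\ge e^{-\eta}$), i.e.\ $y=I_\eta(x)$. The upper bound follows from the data-processing inequality applied to the partition $\{s\ge p\},\{s<p\}$. The bound is attained by the tilt $dH/dG=(x/y)\mathbf{1}_{s\ge p}+\frac{1-x}{1-y}\mathbf{1}_{s<p}$. Monotonicity and continuity of $I_\eta$ on $[0,1]$ follow from convexity of $d_{KL}(\cdot\|\cdot)$; in particular $I_\eta$ is increasing.

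\emph{Step 2 (construction and robustness of price $r^*$).} For given $r<B$, define $G_{r,B}$ by $\mathbb{P}[s\ge p]=I_\eta(r/p)$ on $(r,B]$, with an atom at $B$ and no mass above $B$. By Step 1, every admissible $H$ satisfies $p\,\mathbb{P}_H[s\ge p]\le p\cdot r/p=r$ for $p\in(r,B]$, and prices above $B$ earn zero. Price $r$ earns exactly $r$ under every $H\ll G$, since $G$ has support in $[r,B]$. With the tie-break toward the lowest price, $p^*(H)=r$ for every admissible $H$, so trade occurs with probability one.

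Among Bayes-plausible $G_{r,B}$, choose $(r^*,B^*)$ to minimize $r$. Because $I_\eta$ is increasing, the tail $I_\eta(r/p)$ is increasing in $r$, so lowering $r$ makes $G$ stochastically smaller. Hence the minimal feasible $r$ is pinned down by the mean constraint binding together with majorization, exactly as in the $\eta=0$ case of \cite{RS}. Existence follows by compactness and continuity. Buyer surplus equals $\mu-r^*$.

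\emph{Step 3 (optimality).} I would first prove the auxiliary claim that for every Bayes-plausible $G$, $\sup_H \max_p p\,\mathbb{P}_H[s\ge p]\ge r^*$. Suppose instead that $G$ had worst-case revenue $R<r^*$. By Step 1, $\mathbb{P}_G[s\ge p]\le I_\eta(R/p)$ for all $p$. Then $G$ is first-order dominated by $G_{R,B}$ for suitable $B$, which contradicts Bayes plausibility via the mean and majorization constraints, because $r^*$ was minimal.

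Next I must show that buyer surplus under any $G$ is at most $\mu-r^*$. Buyer surplus under the worst-case $H$ is at most $\int_{p^*(H)}(s-p^*)\,dG$, which is total surplus $\mu$ minus the trade loss minus the $G$-revenue at $p^*(H)$. The difficulty is that the $G$-revenue can be below $r^*$ even when the $H$-revenue is not. I would choose Nature's $H$ to be the tail-tilt at the price $\hat p$ maximizing the perceived revenue $\hat p\,x(\hat p)$, where $x(\hat p)$ is the Step 1 value. Then I would use the convexity bound $d_{KL}(x\|y)$ to show
\[
\hat p\,\mathbb{P}_G[s\ge\hat p]+\int_{s<\hat p}s\,dG\le\mu-(\text{surplus})
\]
dominates $r^*$. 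Concretely, I would compare with the $G_{r,B}$ family: the buyer's surplus $\int_{\hat p}(s-\hat p)\,dG$ is bounded by the surplus of the mean-preserving-spread-maximal distribution whose tail above $\hat p$ satisfies the same inversion constraint. That surplus is maximized by the full-trade twisted form.

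This final step is the main obstacle, since the \cite{RS} replacement argument fails here. Moving mass below $\hat p$ both loses trade surplus and loosens the KL budget, and I must verify that the second effect never outweighs the first.
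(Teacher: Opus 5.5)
Your Steps 1 and 2, and your auxiliary claim that every Bayes-plausible $G$ lets some admissible $H$ yield perceived profit at least $r^{*}$, are sound and follow the paper. Data processing on the partition $\{s\geq p\},\{s<p\}$ is the paper's chain-rule bound.

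One small repair is needed in the auxiliary claim. $G$ need not be first-order dominated by the mean-$\mu$ member $G_{R,B}$, since $G$ may put mass above $B$. Instead, compare integrated CDFs on $[R,B]$ and use the common mean above $B$; this is what the paper's condition $D_{\eta}(r,x)\geq x$ encodes.

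\textbf{The genuine gap} is the second half of Step 3: showing that no $G$---in particular none that sacrifices trade---gives worst-case surplus above $\mu-r^{*}$. This is the substance of the theorem, and you leave it unproved. The route you sketch also does not obviously lead there. The buyer-worst $H$ need not maximize perceived profit. Tilting only at the revenue-maximizing $\hat p$ constrains $G$ at a single price, and the seller may break ties below $\hat p$ anyway. A single tail constraint cannot be played against the integrated Bayes-plausibility constraint. Moreover, the inequality you need, $\hat p\,\mathbb{P}_{G}[s\geq \hat p]+\int_{s<\hat p}s\,dG\geq r^{*}$, does not follow from $\hat p\,x(\hat p)\geq r^{*}$, because the $G$-revenue at $\hat p$ can be strictly below the perceived revenue.

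\textbf{The missing idea} is that Nature need not maximize revenue; it suffices to push the price above the level at which the buyer's surplus drops to $\mu-r^{*}$. The paper's argument runs as follows.
\begin{enumerate}
\item For a putative improvement $G$, define $p^{\dagger}$ by $\int_{p^{\dagger}}^{\overline{v}}(s-p^{\dagger})\,dG(s)=\mu-r^{*}$, equivalently $\int_{0}^{p^{\dagger}}G(s)\,ds=p^{\dagger}-r^{*}$. For any genuine improvement $p^{\dagger}>r^{*}$, and $q\,\mathbb{P}_{G}[s\geq q]\leq \mathbb{E}_{G}[\min\{s,p^{\dagger}\}]=r^{*}$ for every $q\leq p^{\dagger}$.
\item Suppose $\mathbb{P}_{G}[s\geq p]\geq I_{\eta}\bigl(r^{*}/(p-p^{\dagger}+r^{*})\bigr)$ for some $p>p^{\dagger}$. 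Tilt at $p$ so that $\mathbb{P}_{H}[s\geq p]=r^{*}/(p-p^{\dagger}+r^{*})$. This shrinks $G$ proportionally below $p$, and the same bound applied to $H$ shows that $p$ strictly beats every $q\leq p^{\dagger}$. The buyer then gets less than $\mu-r^{*}$.
\item Hence an improvement must satisfy this \emph{shifted} inversion bound at every $p>p^{\dagger}$ simultaneously, with each $p$ guarded against its own $H$.
\item Integrate this bound from $p^{\dagger}$ to the binding point $x^{*}$ where $D_{\eta}(r^{*},x^{*})=x^{*}$; one shows $x^{*}>p^{\dagger}$. After a change of variables, using $p^{\dagger}>r^{*}$ and monotonicity of $I_{\eta}$, this gives $\int_{\underline{v}}^{x^{*}}G>\int_{\underline{v}}^{x^{*}}F$, contradicting Bayes plausibility.
\end{enumerate}

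Without this argument or a substitute, your proposal establishes only that $G_{r^{*}}$ solves the auxiliary min--max problem, not that it is buyer-optimal.
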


\noindent The theorem gives a direct mapping between the disagreement and no-disagreement constructions. Fixing the profit $r^{*}$, the disagreement solution applies a disagreement inversion to the tail probabilities for the equal-revenue distribution; \cite{RS} shows that such distributions can induce full trade while fixing seller profit. This construction introduces an atom of size $1-e^{-\eta}$ at the bottom of the support. Moreover, while each higher price can be made indifferent to the buyer's intended price under some admissible seller belief, the $H$ doing so will generally differ across prices.   Figure \ref{fig:InfoSolution} illustrates the resulting buyer beliefs and several such seller beliefs. In general, no single $H$ generates all of these price indifferences.

\begin{figure}[t]
    \centering

    \begin{subfigure}[t]{.46\textwidth}
        \centering
        \caption{Buyer Beliefs}
        \begin{overpic}[width=\linewidth]{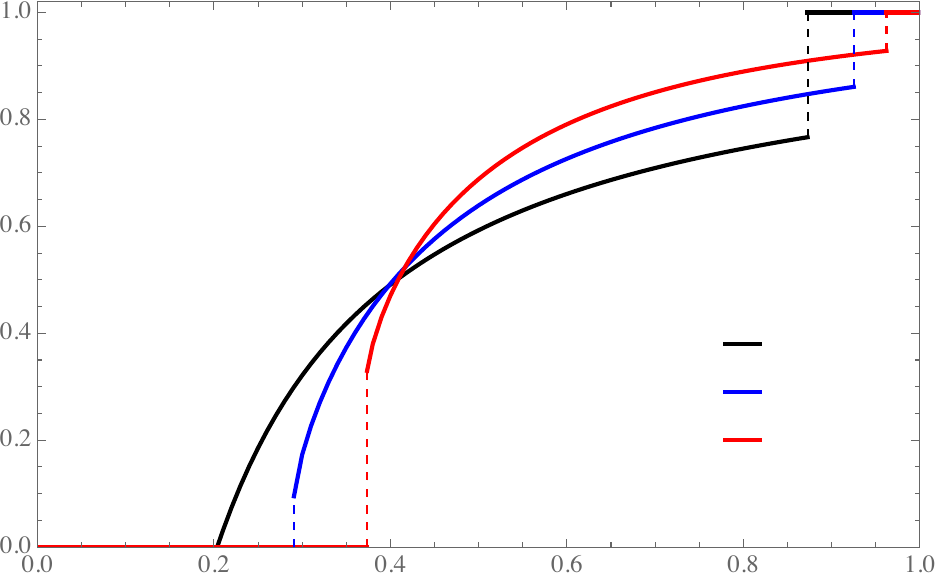}
            \put(48,-5){\makebox(0,0){$s$}}
            \put(-7,33){\rotatebox{90}{\makebox(0,0){$G(s)$}}}

            \put(82.5,26){\makebox(0,0)[l]{\footnotesize $\eta=0$}}
            \put(82.5,21){\makebox(0,0)[l]{\footnotesize $\eta=.1$}}
            \put(82.5,16){\makebox(0,0)[l]{\footnotesize $\eta=.4$}}
        \end{overpic}
        \label{fig:buyerdist}
    \end{subfigure}
    \hfill
    \begin{subfigure}[t]{.46\textwidth}
        \centering
        \caption{Seller Beliefs}
        \begin{overpic}[width=\linewidth]{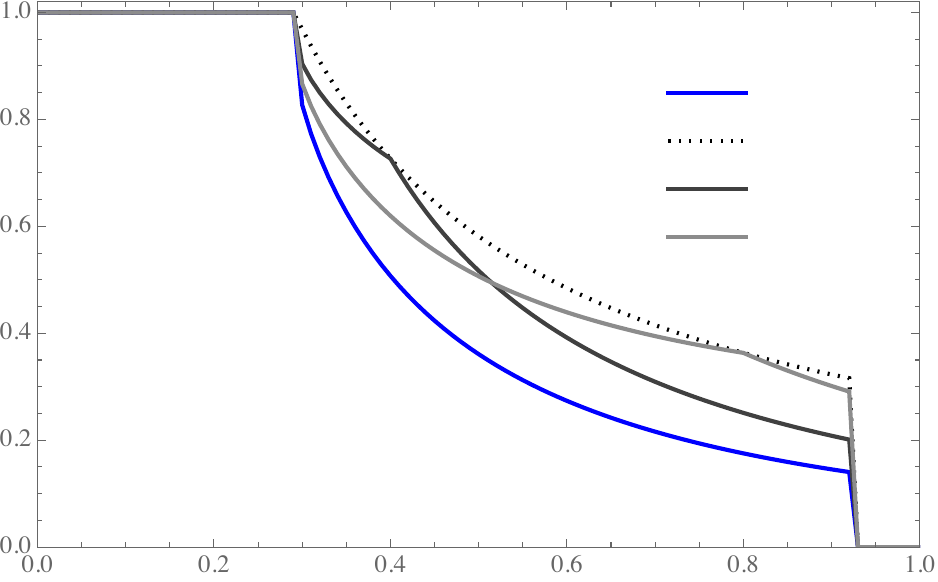}
            \put(48,-5){\makebox(0,0){$p$}}
            \put(-8,33){\rotatebox{90}{\makebox(0,0){$\Pr(s\geq p)$}}}

            \put(81,52.5){\makebox(0,0)[l]{\footnotesize $G$}}
            \put(80,46.5){\makebox(0,0)[l]{\footnotesize Envelope}}
            \put(80.5,42){\makebox(0,0)[l]{\footnotesize $H^{.4}$}}
            \put(80.5,37){\makebox(0,0)[l]{\footnotesize $H^{.8}$}}
        \end{overpic}
        \label{fig:sellerdist}
    \end{subfigure}

    \caption{\footnotesize
    Illustration of the solution for $v\sim U[0,1]$ and $c=0$.
    Panel (a) plots the buyer-optimal distribution of posterior means
    for $\eta=0,.1,.4$. For $\eta>0$, the distribution has an atom of
    size $1-e^{-\eta}$ at the bottom of its support.
    Panel (b) sets $\eta=.1$. The solid blue curve is the buyer-optimal $G$, while the dotted curve is the pointwise upper envelope of
    seller-perceived demand over beliefs $H$ satisfying
    $D_{\mathrm{KL}}(H\Vert G)\leq\eta$. The two gray curves are
    representative beliefs, $H^{.4}$ and $H^{.8}$, attaining the envelope at prices $.4$ and $.8$, respectively; each belief makes the seller indifferent between that price and the lowest price in  the support. 
    Each point on the envelope corresponds to a seller belief tailored to that price; the same belief need not generate the envelope at other prices. 
    }
    \label{fig:InfoSolution}
\end{figure}

Now, while the price-by-price derivation illustrates how disagreement transforms information choice, more work is needed to prove the theorem. First, the price-by-price argument only gives upper bounds on the tail probabilities needed to ensure that no price above $r$ yields profit more than $r$. It leaves open how low $r$ can be, and whether these bounds can be tight for a single $G$. Second, the argument takes full trade as given; it leaves open whether the buyer might benefit from sacrificing trade, say, to avoid high prices. The formal proof addresses these two issues. As mentioned, the replacement strategy of \cite{RS} cannot resolve the second issue because it \emph{may not be possible} to start with an arbitrary feasible buyer surplus-seller profit pair and induce full trade while fixing seller profit; see Example \ref{ex:nocanon} in the Supplemental Appendix.

\subsection{Proving the $c=0$ Theorem}\label{sec:c0}

My proof strategy is to solve an auxiliary problem where the buyer chooses an information structure to minimize the seller's maximum profit across all possible $H$, showing that this solution involves trade with probability 1. I then show separately that the solution to this auxiliary problem is also buyer-optimal because the buyer cannot do better by introducing inefficiency into trade. \\

\subsubsection{The Auxiliary Problem's Solution}

Anticipating the $c > 0$ case, the following argument takes Supp$F=[\underline{v},\overline{v}]$ where $\underline{v} \leq 0$.

\medskip 

\noindent \textbf{Step One: Characterize, Price-by-Price, the $H$ that Maximizes Seller Revenue} Given a price $p$, the revenue-maximizing belief $H$ solves: 

\begin{equation*}
 \max_{H} H(\{s \geq p\}) \text{ such that } D_{KL}(H || G) \leq \eta.
\end{equation*}

\noindent The key behind Step One is the \emph{KL chain rule} \citep{CoverThomas2006}, which implies: 

\begin{multline*}
D_{KL}(H || G) = d_{KL}(\mathbb{P}_{H}[s \geq p] ||  \mathbb{P}_{G}[s \geq p]) \\ +\mathbb{P}_{H}[s \geq p]D_{KL}(H || G; s \geq p)+\mathbb{P}_{H}[s < p]D_{KL}(H || G; s < p); \end{multline*}

\noindent Here $D_{KL}(H||G;s\geq p)$ denotes $D_{KL}(H(\cdot\mid s\geq p)||G(\cdot\mid s\geq p))$,
and similarly for $s<p$. Since the conditional terms are nonnegative,

\begin{equation*}
D_{KL}(H || G) \geq d_{KL}(\mathbb{P}_{H}[s \geq p] ||  \mathbb{P}_{G}[s \geq p]).
\end{equation*}

\noindent Furthermore, for any feasible probability of purchase, this bound is attained by reweighting $G$ above and below $p$, fixing the conditional distributions. Since seller profit at $p$ depends only on the probability of purchase, I rewrite the problem as:

\begin{equation*}
 \max_{H} H(\{s \geq p\}) \text{ such that }  d_{KL}(\mathbb{P}_{H}[s \geq p] ||  \mathbb{P}_{G}[s \geq p]) \leq \eta.
\end{equation*}

\noindent \textbf{Step Two: Minimize the Seller's Maximum Profit Across all Possible $H$} Next, I propose a candidate $G$ such that the seller obtains no more than some profit target $r$, no matter what $H$ is. I use the following observation: 

\begin{lemma}[Bounding Seller Deviation Payoffs] \label{lem:bound}
The seller's profit at price $p$ is at most $r < p$ for every admissible $H$ if and only if \begin{equation*} \mathbb{P}_{G}[s \geq p]\leq I_{\eta}(r/p). \end{equation*}

\end{lemma}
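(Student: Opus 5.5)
By Step One, the largest probability an admissible $H$ can assign to $\{s \geq p\}$ depends on $G$ only through $y := \mathbb{P}_{G}[s\geq p]$. Call this maximum $M_\eta(y) := \max\{x \in [0,1] : d_{KL}(x\|y) \leq \eta\}$. The seller's maximal profit at $p$ over admissible $H$ is then $p\, M_\eta(y)$. The condition ``profit at most $r$ for every admissible $H$'' is therefore equivalent to $M_\eta(y) \leq r/p$. The lemma reduces to showing that this holds if and only if $y \leq I_\eta(r/p)$. In other words, $I_\eta$ is the inverse of $M_\eta$ in the appropriate monotone sense.

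First I would record the properties of $x \mapsto d_{KL}(x\|y)$. For fixed $y \in (0,1)$ it is convex in $x$, vanishes at $x=y$, and is strictly increasing on $[y,1]$, with value $\log(1/y)$ at $x=1$. Hence $M_\eta(y)=1$ if $\log(1/y)\leq \eta$, i.e.\ $y \geq e^{-\eta}$. Otherwise $M_\eta(y)\in(y,1)$ is the unique solution above $y$ of $d_{KL}(x\|y)=\eta$. The edge case $y=0$ gives $M_\eta(0)=0$, since absolute continuity forces $H$ to put no mass there. Symmetrically, for fixed $x\in(0,1)$ the map $y \mapsto d_{KL}(x\|y)$ is strictly decreasing on $(0,x]$, from $+\infty$ to $0$. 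This gives existence and uniqueness of $I_\eta(x)$ in Definition~\ref{defn}, and the map $x\mapsto I_\eta(x)$ is strictly increasing. The conventions $I_\eta(0)=0$ and $I_\eta(1)=e^{-\eta}$ are exactly the limiting cases.

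Next I would show that $M_\eta$ is nondecreasing in $y$. For $y_1<y_2$, any $x\geq y_2$ satisfies $d_{KL}(x\|y_2)\leq d_{KL}(x\|y_1)$, because $d_{KL}(x\|\cdot)$ is decreasing on $(0,x]$. So $M_\eta(y_2)\geq M_\eta(y_1)$, and the monotonicity is strict while $M_\eta<1$. Combining this with $M_\eta(I_\eta(x))=x$, which follows from the definition, I would conclude the following for $x=r/p\in[0,1)$. If $y\leq I_\eta(x)$, then $M_\eta(y)\leq M_\eta(I_\eta(x))=x$. Conversely, if $y>I_\eta(x)$, then either $y\geq e^{-\eta}$, so $M_\eta(y)=1>x$, or $M_\eta(y)>M_\eta(I_\eta(x))=x$ by strict monotonicity. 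The case $r\le 0$ (so $x\le0$) is handled separately: profit at most $r$ requires $M_\eta(y)=0$, i.e.\ $y=0=I_\eta(0)$.

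The main obstacle I expect is the bookkeeping at the boundaries rather than any deep step. One issue is confirming that Step One's maximum is attained, so that ``at most $r$'' matches a non-strict inequality; here the feasible set $\{x: d_{KL}(x\|y)\le\eta\}$ is closed. The other is checking the regime $y\ge e^{-\eta}$, where the constraint on $H$ is slack and $H$ can put full mass on $\{s\ge p\}$ by conditioning $G$. For strict monotonicity of $M_\eta$ when it is interior, I would argue via the implicit function theorem, using $\partial_y d_{KL}(x\|y)=(y-x)/(y(1-y))<0$ and $\partial_x d_{KL}>0$ for $x>y$.
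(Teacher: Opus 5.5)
Your proposal is correct and follows essentially the same route as the paper: both use Step One to reduce the problem to the Bernoulli divergence $d_{KL}(\mathbb{P}_H[s\ge p]\|\mathbb{P}_G[s\ge p])$, and then invert the budget constraint at $x=r/p$ using that $d_{KL}(x\|y)$ is decreasing in $y$ on $y\le x$. Your maximal-reachable-probability map $M_\eta$ makes rigorous the paper's informal ``reachable iff'' step, including the case $y\ge e^{-\eta}$ where $H$ can put all mass on $\{s\ge p\}$; the case $r<0$ is outside the lemma's scope, since $I_\eta$ is only defined on $[0,1]$.
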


This Lemma helps identify a bound on what $r$ can be. This step uses the following inequality, which holds whenever $r \leq x$: 
\begin{equation}
s \leq r+ \max\{s-x,0\}+\int_{r}^{x}\mathbf{1}[s \geq p]dp. \label{eq:prelimineq}
\end{equation}
At $s<r \leq x$ this reduces to $s < r$, while equality holds whenever $r \leq s < x$ or $s  \geq x$. Now, take the expectation with respect to $s$ using $G$: 

\begin{equation*}
\mu \leq r+ \mathbb{E}_{G}\left[\max\{s-x,0\}\right]+\int_{r}^{x}\mathbb{P}_{G}[s \geq p]dp.
\end{equation*}
Using Lemma \ref{lem:bound}, the identity $\mu-x+\int_{\underline{v}}^{x} G(s)ds=\mathbb{E}_{G}\left[\max\{s-x,0\}\right]$, and the fact that $G$ is a mean preserving contraction of $F$:
\begin{equation}
x \leq \overbrace{r+ \int_{\underline{v}}^{x} F(s)ds+\int_{r}^{x}I_{\eta}(r/p)dp}^{:=D_{\eta}(r,x)}.\label{mainsuffeq}
\end{equation}

\noindent But (\ref{mainsuffeq}) is a condition on $F$ and $\eta$ alone. So, if the seller can obtain no more than $r$ for any $H$, then $0 \leq \min_{x \in [r, \bar{v}]} D_{\eta}(r,x)-x$ must hold. Suppose $r^{*}$ solves: 

\begin{equation*} 
\min\{ r \in [0,\mu] : \min_{x \in [r, \bar{v}]} D_{\eta}(r,x)-x \geq 0\}.
\end{equation*}

\noindent In particular, $\min_{x \geq \mu} D_{\eta}(\mu,x)-x \geq 0$, so the feasible set is nonempty; continuity of $D_{\eta}(r,x)$ implies that this minimum is attained. Furthermore, $r^{*} \leq \mu$ is immediate since there is a candidate---no information---at which the seller gets $\mu$; and $D_{\eta}(0, \bar{v}) - \bar{v}=\int_{\underline{v}}^{v} F(s)ds - \bar{v}=-\mu <0$, so that $r^{*} > 0$ by continuity\footnote{Continuity at $r=0$ holds since $I_{\eta}(y) \leq y$ implies $0 \leq \int_{r}^{x} I_{\eta}(r/p) dp \leq r \int_{r}^{x} \frac{dp}{p} = r \log(x/r)$, which converges to 0 as $r \rightarrow 0$.} of $D_{\eta}(r,x)-x$.  

The necessary condition for a seller to be held to no more than $r$ is also sufficient. Set:

\begin{equation*}
G_{r}(p)= \begin{cases} 0 & p < r \\ 1-I_{\eta}(r/p) & r \leq p < B \\ 1 & p \geq B  \end{cases}
\end{equation*}

\noindent When $\eta=0$, $G_{r}$ reduces to the solution described in section \ref{sec:nodisag}; thus, the impact of disagreement is to replace the tail probabilities with their disagreement inversions.\footnote{As mentioned, $G_{r}(p)$ has an atom at $r$ since $I_{\eta}(1)=e^{-\eta}$.} 

\begin{lemma}  \label{lem:BP}
For any $r \in (0, \mu]$ with $\min_{x \in [r, \bar{v}]} D_{\eta}(r,x)-x \geq 0$, there exists $B$ such that $G_{r}(p)$ is Bayes plausible and ensures the seller's maximal profit over all admissible $H$ is $r$. 
\end{lemma}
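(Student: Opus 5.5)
The plan is to pin down $B$ from the mean condition, reduce Bayes plausibility to the hypothesis on $D_{\eta}(r,x)$ and elementary facts about $F$, and then bound the seller's profit price by price with Lemma \ref{lem:bound}.

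\emph{Preliminaries.} First I check that $G_{r}$ is a valid CDF. On $[r,B)$ this requires $I_{\eta}$ to be nondecreasing and continuous on $(0,1]$. This follows from the implicit definition $d_{KL}(x\|I_{\eta}(x))=\eta$ with $I_{\eta}(x)\le x$:
\begin{itemize}
\item on the region $y\le x$, $d_{KL}(x\|y)$ is increasing in $x$ and decreasing in $y$;
\item hence a larger $x$ forces a larger root $y$.
\end{itemize}
So $p\mapsto 1-I_{\eta}(r/p)$ is nondecreasing and continuous on $[r,B)$. There is an atom of size $1-e^{-\eta}$ at $r$ and an atom at $B$. By this continuity, $\mathbb{P}_{G_r}[s\ge p]=I_{\eta}(r/p)$ for every $p\in(r,B]$, and $\mathbb{P}_{G_r}[s\ge p]=0$ for $p>B$.

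\emph{Choice of $B$.} Since $G_r$ is supported on $[r,B]$ with $r>0$, its mean is $r+\int_r^B I_{\eta}(r/p)\,dp$. I choose $B$ to make this equal to $\mu$.
\begin{itemize}
\item The map $B\mapsto r+\int_r^B I_{\eta}(r/p)\,dp$ is continuous and nondecreasing.
\item At $B=r$ it equals $r\le\mu$.
\item At $B=\bar v$, apply the hypothesis with $x=\bar v$ and use $\int_{\underline v}^{\bar v}F=\bar v-\mu$. This gives $\bar v\le r+\bar v-\mu+\int_r^{\bar v}I_{\eta}(r/p)\,dp$, i.e. the value at $\bar v$ is at least $\mu$.
\end{itemize}
The intermediate value theorem then yields $B\in[r,\bar v]$ with mean exactly $\mu$.

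\emph{Bayes plausibility.} I verify $\int_{\underline v}^x G_r\le\int_{\underline v}^x F$ on three ranges.
\begin{itemize}
\item For $x<r$ the left side is $0$, so the inequality is trivial.
\item For $x\in[r,B)$, direct integration gives $\int_{\underline v}^x G_r(s)\,ds=x-r-\int_r^x I_{\eta}(r/p)\,dp$. The required inequality is then exactly $x\le D_{\eta}(r,x)$, which is the hypothesis. This is the point of the definition of $D_\eta$.
\item For $x\ge B$, $G_r=1$, so $\int_{\underline v}^x G_r=x-\mu$ because the means match. Meanwhile $\int_{\underline v}^x F=x-\mu+\int_x^{\bar v}(1-F)\ge x-\mu$.
\end{itemize}
Equality of means holds by the choice of $B$, so $G_r$ is Bayes plausible.

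\emph{Seller's maximal profit.} Fix an admissible $H$, i.e. $D_{KL}(H\|G_r)\le\eta$, and consider each price.
\begin{itemize}
\item For $p\le r$, profit is at most $p\le r$.
\item For $p\in(r,B]$, the tail is $\mathbb{P}_{G_r}[s\ge p]=I_{\eta}(r/p)$, so Lemma \ref{lem:bound} bounds profit by $r$.
\item For $p>B$, finite KL divergence forces $H\ll G_r$, so $H$ puts zero mass on $\{s\ge p\}$ and profit is $0$.
\end{itemize}
Finally, $p=r$ earns exactly $r$ under every admissible $H$, since $H\ll G_r$ puts all mass on $[r,B]$. Hence the maximum over prices and admissible beliefs is exactly $r$.

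The step I expect to need the most care is the bookkeeping between the closed tail $\mathbb{P}[s\ge p]$ and the CDF, at the atoms $r$ and $B$ and at $p=B$ itself. This is where the continuity of $I_\eta$ and the convention $I_\eta(1)=e^{-\eta}$ enter. The substantive inequalities all reduce to the hypothesis on $D_\eta$.
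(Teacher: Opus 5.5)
Your proposal is correct and follows the paper's proof essentially step for step. You pin down $B$ by the mean condition, using the hypothesis at $x=\bar v$ to get the intermediate value step. Bayes plausibility on $[r,B]$ reduces exactly to $D_{\eta}(r,x)\ge x$, and equality of means handles $x> B$. Lemma~\ref{lem:bound} together with absolute continuity gives the profit bound and the exact profit $r$ at $p=r$. Your added checks that $I_{\eta}$ is monotone and continuous, and your closed-tail bookkeeping at the atoms, fill in details the paper leaves implicit.
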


\noindent The proof checks conditions for Bayes plausibility and the condition from Lemma \ref{lem:bound}.

\subsubsection{Buyer-Optimal Information}

Stepping back, $G_{r^{*}}$ ensures that no $H$ can give the seller (subjective) profit above $r^{*}$, and this is the lowest the seller can be held to if $H$ maximizes seller profit. But $H$ minimizing buyer surplus \emph{need not} imply that it maximizes seller profit. I now show that no other $G$ can increase buyer surplus, say, by inducing less trade in an attempt to lower the price.

Here I restrict attention to information structures with nonnegative posterior means, which holds automatically when $c=0$. For the buyer to be better off under an alternative $G$, there would need to be some $\tilde{p}$ such that:

\begin{equation*}
\int_{\tilde{p}}^{\bar{v}} (s-\tilde{p}) dG(s) \geq \mu - r^{*}
\end{equation*}

\noindent Let $p^{\dagger}$ denote the value of $\tilde{p}$ at which equality holds, which exists since the left hand side is $\mu$ at $\tilde{p}=0$, 0 at $\tilde{p}=\bar{v}$, and continuous. Then $\int_{0}^{p^{\dagger}} G(s)ds = p^{\dagger}- r^{*}$, which implies $p^{\dagger} > r^{*}$; otherwise, $G$ has no mass below $r^{*}$, so neither can any admissible $H$ (since $H$ must be absolutely continuous with respect to $G$). Thus, $r^{*}$ induces trade with probability 1 for any $H$, meaning the buyer cannot improve on $\mu-r^{*}$. 

The fundamental claim is the following: 

\begin{claim}[Price inducibility]
A candidate $G$ can only be an improvement if for every $p > p^{\dagger}$: 

\begin{equation}
\mathbb{P}_{G}[s \geq p] < I_{\eta}\left(\frac{r^{*}}{p-p^{\dagger}+r^{*}}\right). \label{eq:PIEq}
\end{equation}

\end{claim}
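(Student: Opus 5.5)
The plan is to argue by contraposition. Suppose $G$ is a candidate improvement, so that every admissible $H$ leaves the buyer strictly more than $\mu-r^{*}$. Suppose also that (\ref{eq:PIEq}) fails at some $p>p^{\dagger}$, that is, $\mathbb{P}_{G}[s\geq p]\geq I_{\eta}(x_{p})$ with $x_{p}:=r^{*}/(p-p^{\dagger}+r^{*})$. I will exhibit an admissible $H$ under which the seller's chosen price is at least $p^{\dagger}$.

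This suffices because of the buyer surplus function $S(q)=\int_{q}^{\bar v}(s-q)\,dG(s)$. It is continuous and weakly decreasing, and $S(p^{\dagger})=\mu-r^{*}$ by the definition of $p^{\dagger}$. So any $H$ with $p^{*}(H)\geq p^{\dagger}$ gives the buyer at most $\mu-r^{*}$, which contradicts $G$ being an improvement.

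To build $H$, I will use Step One exactly as in Lemma \ref{lem:bound}. Keep the conditional distributions of $G$ on $\{s\geq p\}$ and $\{s<p\}$, and reweight so that $\mathbb{P}_{H}[s\geq p]=\max\{x_{p},\mathbb{P}_{G}[s\geq p]\}$. By the KL chain rule, the divergence is the Bernoulli term $d_{KL}$. Because $d_{KL}(x\|y)$ is decreasing in $y$ on $y\leq x$, and $\mathbb{P}_{G}[s\geq p]\geq I_{\eta}(x_{p})$, this term is at most $\eta$. So $H$ is admissible. Under $H$ the seller's profit at $p$ is at least $p\,x_{p}=p r^{*}/(p-p^{\dagger}+r^{*})\geq r^{*}$.

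It remains to show that no price $q<p^{\dagger}$ gives the seller strictly more under $H$. Prices above $p^{\dagger}$ are harmless whichever one the seller picks, and the tie-breaking toward the lowest maximizer is then irrelevant to the conclusion. To control low prices I will use the surplus identity $\int_{0}^{p^{\dagger}}G(s)\,ds=p^{\dagger}-r^{*}$ together with monotonicity of $G$. For $q<p^{\dagger}$ these give $(p^{\dagger}-q)G(q)\leq\int_{q}^{p^{\dagger}}G\leq p^{\dagger}-r^{*}$, hence
\begin{equation*}
\mathbb{P}_{G}[s\geq q]\geq 1-\frac{p^{\dagger}-r^{*}}{p^{\dagger}-q}.
\end{equation*}
This only bounds $G$'s tail from below, though. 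What I actually need is an upper bound on $q\,\mathbb{P}_{H}[s\geq q]$, where
\begin{equation*}
\mathbb{P}_{H}[s\geq q]=\mathbb{P}_{H}[s\geq p]+\bigl(1-\mathbb{P}_{H}[s\geq p]\bigr)\frac{\mathbb{P}_{G}[q\leq s<p]}{\mathbb{P}_{G}[s<p]}.
\end{equation*}

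The comparison of $q\,\mathbb{P}_{H}[s\geq q]$ against $p\,x_{p}$ for $q<p^{\dagger}$ is the main obstacle. I expect the specific threshold $r^{*}/(p-p^{\dagger}+r^{*})$ to be calibrated exactly so that this comparison closes. It is an equal-revenue tail shifted by $p^{\dagger}-r^{*}$, which is the area $\int_{0}^{p^{\dagger}}G$ that $G$ places below $p^{\dagger}$.

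If the direct pointwise comparison fails, I will change the construction. Instead of fixing the conditional law below $p$, I will let Nature choose the $H$ that maximizes profit at $p$ net of the best profit at prices below $p^{\dagger}$. That is, I will solve the two-event Bernoulli problem on the partition $\{s<q\},\{q\leq s<p\},\{s\geq p\}$, again via the chain rule. I will then verify that infeasibility of this deviation is exactly the strict inequality (\ref{eq:PIEq}).

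Strictness in the claim comes from tie-breaking. At equality in (\ref{eq:PIEq}), the indifferent seller's chosen price still yields surplus at most $S(p^{\dagger})=\mu-r^{*}$, so $G$ is not a strict improvement.
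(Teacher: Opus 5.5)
Your construction of $H$ is the same as the paper's: it equals $G$ when $\mathbb{P}_G[s\ge p]\ge x_p$, and otherwise it is the admissible rescaling with sale probability exactly $x_p$. Your reduction to showing that the seller's price is at least $p^{\dagger}$ is also sound. But the step you label the main obstacle is the entire content of the claim, and the proposal does not supply it. The tail bound from monotonicity of $G$ points the wrong way, as you note, and the three-cell reweighting is only a plan.

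The missing idea is to bound revenue at low prices by an integrated quantity rather than by tail probabilities. For $q\le p^{\dagger}$ and $s\ge 0$ one has $q\mathbf{1}[s\ge q]\le\min\{s,p^{\dagger}\}$, so for any $H$,
\[
q\,\mathbb{P}_H[s\ge q]\le p^{\dagger}-\int_0^{p^{\dagger}}H(s)\,ds .
\]
When $H=G$, the right side is $r^{*}$ by the definition of $p^{\dagger}$, which is strictly below $p\,x_p$ because $p^{\dagger}>r^{*}$. For the rescaled $H$, $p^{\dagger}<p$ gives $H=\frac{1-x_p}{1-\mathbb{P}_G[s\ge p]}G$ on $[0,p^{\dagger}]$, hence $\int_0^{p^{\dagger}}H=\frac{1-x_p}{1-\mathbb{P}_G[s\ge p]}(p^{\dagger}-r^{*})$. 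The identity $p\,x_p=r^{*}+x_p(p^{\dagger}-r^{*})$ is exactly where the calibration of $x_p$ enters, and it yields
\[
q\,\mathbb{P}_H[s\ge q]-p\,x_p\le-(1-x_p)(p^{\dagger}-r^{*})\frac{\mathbb{P}_G[s\ge p]}{1-\mathbb{P}_G[s\ge p]}<0 .
\]

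Your target is also too weak as stated. Because $p^{*}(H)$ is the \emph{lowest} maximizer, it is not enough that no $q<p^{\dagger}$ earns strictly more than $p$. A low price that merely ties with the maximum would be selected, so tie-breaking is not irrelevant. You need every $q<p^{\dagger}$ to earn strictly less than $p$. The display above delivers this, since $p^{\dagger}>r^{*}$, $x_p<1$, and $\mathbb{P}_G[s\ge p]\ge I_{\eta}(x_p)>0$.

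For the same reason, your closing remark misplaces the source of strictness in the claim. At equality $\mathbb{P}_G[s\ge p]=I_{\eta}(x_p)$ the construction still makes $p$ strictly dominate every price up to $p^{\dagger}$, so no seller indifference arises. The claim is strict simply because its negation, the weak inequality, is the case the construction handles.
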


\noindent The logic is that if this inequality fails at some $p$, then the work in Step One implies Nature can choose $H$ so that the seller earns $\frac{r^{*}p}{p-p^{\dagger}+r^{*}} > r^{*}$ at $p$ by rescaling the distribution below $p$. The definition of $p^{\dagger}$ implies that under this rescaling, every price $q \leq p^{\dagger}$ yields lower profit, so every seller best response weakly exceeds $p^{\dagger}$. So any candidate improvement must satisfy the stated inequality. Section \ref{sec:proofs} presents the formal argument. 

I prove the theorem by showing (\ref{eq:PIEq}) cannot hold.  By minimality and continuity, the constraint $D_{\eta}(r,x)-x \geq 0$ must bind at $r^{*}$, with the minimum attained at some $x^{*} \in [r^{*},\overline{v}]$. In particular, $D_{\eta}(r^{*},x^{*})-x^{*}=0$. Since this expression becomes $\int_{\underline{v}}^{r^{*}} F(s)ds=0$ when $x^{*}=r^{*}$, Bayes plausibility would require $G$ has no mass below $r^{*}$ either, so that $r^{*}$ induces trade with probability 1 under every admissible $H$; since no lower price can be optimal, the buyer cannot obtain more than $\mu-r^{*}$. So $x^{*} > r^{*}$ for any candidate improvement, and $D_{\eta}(r^{*},x^{*})-x^{*}=0$ yields:
\begin{equation} 
 \int_{\underline{v}}^{x^{*}} F(s)ds -(x^{*}-r^{*})= -\int_{r^{*}}^{x^{*}} I_{\eta}(r^{*}/p) dp.\label{eq:ContraEQ}
\end{equation}

\noindent Bayes plausibility then gives $\int_{\underline{v}}^{p^{\dagger}}F(s)ds - (p^{\dagger}-r^{*}) \geq 0$. Since $ \int_{\underline{v}}^{x} F(s)ds -(x-r^{*})$ is decreasing in $x$ and (\ref{eq:ContraEQ}) implies changing $p^{\dagger}$ to $x^{*}$ yields something negative, I infer that $x^{*} > p^{\dagger}$. Price inducibility implies: 

\begin{align*} 
\int_{\underline{v}}^{x^{*}} G(s)ds &= p^{\dagger}- r^{*} + \int_{p^{\dagger}}^{x^{*}}G(s)ds  > p^{\dagger}- r^{*} + \int_{p^{\dagger}}^{x^{*}}\left(1- I_{\eta} \left( \frac{r^{*}}{s-p^{\dagger}+r^{*}} \right) \right)ds \\ & = x^{*} -r^{*} - \int_{r^{*}}^{x^{*}-p^{\dagger}+r^{*}} I_{\eta}(r^{*}/u) du > x^{*} - r^{*} - \int_{r^{*}}^{x^{*}} I_{\eta}(r^{*}/u) du. 
\end{align*}

\noindent where the last equality follows by change of variables and algebra and the last inequality holds since $p^{\dagger} > r^{*}$. But (\ref{eq:ContraEQ}) then yields: 

\begin{equation*} 
\int_{\underline{v}}^{x^{*}} G(s)ds > \int_{\underline{v}}^{x^{*}} F(s) ds, 
\end{equation*}

\noindent contradicting Bayes plausibility.

\subsection{Generalizing to $c > 0$}

When $c > 0$, seller profit minimization need not correspond to buyer optimal information since trade could be inefficient. Thus, Steps One and Two need not yield buyer-optimal information. This creates a sharper complication extending the results to $c> 0$ relative to \cite{RS}, where a modified version of their replacement argument delivers the generalization. Here, a separate argument is needed to recover the structure of buyer-optimal information.

The key step is to condition on the event that the buyer's signal is above $c$; let $\alpha$ denote the probability of this event. This is an endogenous outcome that the buyer maximizes over to determine her optimal information structure. Introducing $\alpha$ lets me reframe the problem conditional on $s > c$ as analogous to the $c=0$ case with exacerbated disagreement. Define: 

\begin{equation*}
\eta^{*}(\alpha)=-\log\left(
\frac{e^{-\eta}-1+\alpha}{\alpha}\right),
\end{equation*} 

\noindent for $\alpha > 1-e^{-\eta}$; otherwise, let $\eta^{*}(\alpha)=\infty$

\begin{lemma}\label{lem:conditioning}
If $\alpha \leq 1-e^{-\eta}$, then buyer surplus is zero. If $\alpha>1-e^{-\eta}$, then fixing the distribution of $v$ conditional on $s > c$, the buyer's problem conditional on this event is equivalent, after shifting values and prices by $c$, to the $c=0$ problem with disagreement parameter $\eta^*(\alpha)$.  Unconditional buyer surplus and seller profit evaluated under the buyer's beliefs are $\alpha$ times their corresponding conditional values.
\end{lemma}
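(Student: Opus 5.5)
The plan is to split the KL divergence of any admissible $H$ according to the event $A=\{s>c\}$ using the same chain rule as in Step One, and then optimize out the Bernoulli component. Write $\beta=\mathbb{P}_H[A]$. For any price $p\geq c$ (prices below or at $c$ yield nonpositive profit and so are irrelevant, except for $p=c$ when profit is zero), seller profit under $H$ is $\beta\,(p-c)\,\mathbb{P}_{H}[s\geq p\mid A]$. Therefore the seller's choice of price depends only on the conditional belief $H_A:=H(\cdot\mid A)$, as long as $\beta>0$. Buyer surplus, evaluated under $G$, is $\alpha\,\mathbb{E}_{G}[(s-p)^+\mid A]$, because $s\leq c\leq p$ off $A$ gives zero surplus. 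The chain rule gives
\begin{equation*}
D_{KL}(H\|G)=d_{KL}(\beta\|\alpha)+\beta D_{KL}(H_A\|G_A)+(1-\beta)D_{KL}(H_{A^c}\|G_{A^c}).
\end{equation*}
We can take $H_{A^c}=G_{A^c}$, which is harmless since this component does not affect prices. Then the conditional beliefs that Nature can induce are exactly those with $D_{KL}(H_A\|G_A)\leq \sup_{\beta\in(0,1]}\,(\eta-d_{KL}(\beta\|\alpha))/\beta$.

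The key computation is this supremum. Because $d_{KL}(\beta\|\alpha)$ is convex in $\beta$, the ratio $(\eta-d_{KL}(\beta\|\alpha))/\beta$ is maximized either at a first-order condition or at a boundary. I expect that at the optimum Nature drives $\beta\downarrow$ toward the boundary value that makes the numerator's linearization favorable. A cleaner route is to use the variational (Donsker--Varadhan) form. The condition $D_{KL}(H_A\|G_A)\leq \tilde\eta/\beta$ for some feasible $\beta$ can be rewritten through likelihood ratios. With density ratio $\ell=dH_A/dG_A$, the total divergence is minimized over $\beta$ in closed form: $\min_\beta\{d_{KL}(\beta\|\alpha)+\beta K\}=-\log(1-\alpha+\alpha e^{-K})$, attained at $\beta=\alpha e^{-K}/(1-\alpha+\alpha e^{-K})$. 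Setting $-\log(1-\alpha+\alpha e^{-K})\leq\eta$ and solving gives $e^{-K}\geq (e^{-\eta}-1+\alpha)/\alpha$, that is, $K\leq \eta^*(\alpha)$. If $\alpha\leq 1-e^{-\eta}$, the right-hand side is nonpositive, so every $K$ is feasible. Taking $K\to\infty$ corresponds to $\beta\to0$ with a degenerate $H_A$. In that case Nature may place $H_A$ (with arbitrarily small $\beta>0$) arbitrarily close to a point mass at the top of $G_A$'s support, which makes the seller charge the top posterior mean (or the essential supremum) and yields zero surplus. Handling the supremum there needs a small limiting argument, possibly using a mass point or choosing $H_A=G_A(\cdot\mid s\geq p')$ for $p'$ near the top, for which $K=-\log\mathbb{P}_{G_A}[s\geq p']$ is finite.

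When $\alpha>1-e^{-\eta}$, this shows that the admissible set of conditional beliefs is exactly $\{H_A:D_{KL}(H_A\|G_A)\leq\eta^*(\alpha)\}$. Since $p^*(H)$ depends only on $H_A$ (for $\beta>0$, and ties are broken identically), shifting $s$ and $p$ by $c$ maps the conditional problem to the $c=0$ problem with parameter $\eta^*(\alpha)$, with prior equal to the conditional distribution of $v-c$. The surplus and profit factorizations follow from the displayed formulas. The main obstacle is checking carefully that the $\beta$-optimization commutes with the infimum over $H$, and the boundary/degenerate cases. These include the $\beta=0$ case (where the seller prices at $c$ and earns zero, which Nature never prefers) and the attainment in the $\alpha\leq1-e^{-\eta}$ case.
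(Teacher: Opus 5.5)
Your proposal is correct and takes essentially the paper's route: split on $\{s>c\}$ by the chain rule, set $H=G$ off the event, minimize $d_{KL}(\beta\|\alpha)+\beta K$ in closed form to $-\log(1-\alpha+\alpha e^{-K})$ and invert to get $\eta^{*}(\alpha)$, and for $\alpha\le 1-e^{-\eta}$ take $H(\cdot\mid s>c)=G(\cdot\mid s\geq q)$ with $q$ tending to the top of the support. Two small cleanups: the point-mass variant you mention is inadmissible unless $G$ has an atom there, and for $\alpha>1-e^{-\eta}$ the case $\beta=0$ is simply infeasible because $d_{KL}(0\|\alpha)=-\log(1-\alpha)>\eta$, which is how the paper rules it out.
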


In light of this lemma, I now describe a replacement argument showing that the resulting buyer-optimal information structure takes a similar form as the $c=0$ case. First, start with an arbitrary information structure $G$, and let $\alpha_{G} = \mathbb{P}_{\tilde{s} \sim G}[\tilde{s} > c]$. Consider a replacement of the following form:

\begin{itemize} 
\item Draw an initial signal, $\tilde{s}$, according to $G$, with the buyer only observing whether $\tilde{s} >c$. 

\item If $\tilde{s} > c$, draw $s$ so that $s-c \sim \tilde{G}$, where: 

\begin{equation*} 
\tilde{G}(t)= \begin{cases} 0 & t < r_{G} \\ 1-I_{\eta^{*}(\alpha_{G})}(r_{G}/t) & r_{G} \leq t < B \\ 1 & t \geq B  \end{cases},
\end{equation*}

\noindent for some $r_{G}$. 

\item When $\tilde{s} \leq c$, draw signal $s_{*}=\mathbb{E}[v \mid \tilde{s} \leq c] \leq c$.
\end{itemize}

\noindent In particular, $r_{G}$ can be set equal to the value for $r^{*}$ from Section \ref{sec:c0} for the distribution of $v-c$ conditional on $\tilde{s} >c$, with disagreement parameter $\eta^{*}(\alpha_{G})$. Both replacements are Bayes plausible conditional on the respective events, so combining them remains Bayes plausible with respect to $F$. And by construction, the buyer must do weakly better conditional on $\tilde{s} > c$ with this replacement information structure than the original. Under the replacement, trade occurs with probability 1 whenever $\tilde{s} > c$ and does not occur when $\tilde{s} \leq c$. Hence $\alpha_{G}$ is also the probability of trade.

This argument shows it is without loss to restrict to information structures that are derived from the $c=0$ solution, but it leaves open how to find the corresponding values for $\alpha$ and $r$. Again carrying through the modifications in Section \ref{sec:proofs}, I derive the following as an upper bound on total surplus:\footnote{To avoid domain issues in defining $\Gamma_{\eta}(\alpha,r)$, assume this is 0 if $\alpha \leq 1-e^{-\eta}$.} 

\begin{equation*} 
\Gamma_{\eta}(\alpha,r):=\min_{x \in [-c,1-c]}  \alpha \min\{x,r\}+\mu - c- x + \int_{0}^{c+x}F(s)ds+\alpha \int_{r}^{\max\{x,r\}} I_{\eta^{*}(\alpha)}(r/t)dt.
\end{equation*}

\noindent 

Note that $\Gamma_{\eta}(\alpha,r) \geq \alpha r$ must hold, since $\alpha r$ is the surplus the seller obtains (from the buyer's perspective). Also, $\Gamma_{\eta}(\alpha,r) \geq \mu-c$ must hold since the replacement forgoes trade only when $s \leq c$, in which case trade generates weakly negative surplus; hence, total surplus must be weakly greater than the always-trade benchmark. Section \ref{sec:proofs} shows these conditions ensure a Bayes plausible information structure can be constructed, yielding the following theorem: 

\begin{theorem}[$c>0$]\label{thm}
The buyer's maximal robust surplus when $c>0$ is strictly positive if and only if $c < \bar{c}$, where for $\eta >0$:\footnote{When $\eta=0$, $\bar{c}=1$ as per \cite{RS}.}

\begin{equation*} 
\bar{c} = \mathbb{E}[v \mid v \geq F^{-1}(e^{-\eta})]
\end{equation*}

\noindent For any $c < \bar{c}$, the buyer-optimal probability that $s > c$, $\alpha^{*}$, and seller profit conditional on $s > c$, $r^{*}$, maximize:

\begin{equation*} 
\max_{\Gamma_{\eta}(\alpha,r) \geq \max\{\mu-c, \alpha r\}} [\Gamma_{\eta}(\alpha,r) - \alpha r].
\end{equation*}

\noindent Buyer-optimal information can be taken to involve a single no-trade signal with probability $1-\alpha^{*}$, and otherwise a signal supported on $[c+r^{*}, c+ B^{*}]$ with distribution: 

\begin{equation*} 
\mathbb{P}_{G}[s \geq p | s > c]= I_{\eta^{*}(\alpha^{*})} \left(\frac{r^{*}}{p-c} \right) ~~~ c + r^{*} < p  \leq c +B^{*}.
\end{equation*}

\end{theorem}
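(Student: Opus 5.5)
The plan is to reduce everything to the $c=0$ analysis through Lemma \ref{lem:conditioning} and the replacement argument described just before the statement, and then optimize over the two remaining scalars $(\alpha,r)$.

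\textbf{Reduction to two scalars.} Fix an arbitrary Bayes-plausible $G$ and set $\alpha_G=\mathbb{P}_G[s>c]$. If $\alpha_G\le 1-e^{-\eta}$, Lemma \ref{lem:conditioning} gives zero surplus. Otherwise the replacement (pool all signals $\le c$ into $s_*$, and above $c$ use $\tilde G$ with parameter $\eta^*(\alpha_G)$) is Bayes plausible and weakly better for the buyer. Two facts justify this. First, by the $c=0$ Theorem applied to the conditional distribution of $v-c$ given $\tilde s>c$, the buyer-optimal conditional information under disagreement $\eta^*(\alpha_G)$ is $\tilde G$ with full conditional trade. Second, a single no-trade signal at $s_*\le c$ is never purchased, since prices lie in $[c,1]$ and $s_*\le c$. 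Hence it is without loss to restrict attention to pairs $(\alpha,r)$ together with a conditional distribution of the stated form. Under such a structure buyer surplus is $\alpha(\text{conditional mean of }v-c)-\alpha r=\text{total surplus}-\alpha r$.

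\textbf{Feasibility region.} Next I characterize which $(\alpha,r)$ are jointly implementable, meaning there is a split of $F$ into an event of probability $\alpha$ on which the conditional law supports $c+\tilde G$ for some $B$. I would rerun inequality (\ref{eq:prelimineq}) with an extra no-trade atom. Take $s\mapsto$ surplus $(s-c)\mathbf 1[s>c]$. Bound it pointwise by $\alpha$-weighted terms $\min\{x,r\}$, plus $\max\{s-c-x,0\}$, plus tail integrals. Then replace the tails using Lemma \ref{lem:bound} at level $\eta^*(\alpha)$, and replace $\int G$ by $\int F$ via Bayes plausibility. This shows total surplus is at most $\Gamma_\eta(\alpha,r)$. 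For the converse I would use the argument of Lemma \ref{lem:BP}: choose $B$ so that the conditional mean matches, and verify the integral Bayes-plausibility inequalities, where minimality over $x$ in $\Gamma$ plays the role of $\min_x D_\eta(r,x)-x\ge 0$. The constraints $\Gamma\ge \alpha r$ and $\Gamma\ge \mu-c$ are exactly what is needed for such a $B$ and such a split to exist. Combining the two directions, the buyer's problem is $\max[\Gamma_\eta(\alpha,r)-\alpha r]$ over this region, which gives the displayed program and the stated form of $G$.

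\textbf{Positivity threshold.} Surplus is positive iff some $\alpha>1-e^{-\eta}$ admits a conditional distribution with mean strictly above $c$. The most favorable split for a given $\alpha$ takes the top $\alpha$-quantile of $F$, whose conditional mean $\mathbb{E}[v\mid v\ge F^{-1}(1-\alpha)]$ decreases in $\alpha$. As $\alpha\downarrow 1-e^{-\eta}$ this mean tends to $\mathbb{E}[v\mid v\ge F^{-1}(e^{-\eta})]=\bar c$. If $c<\bar c$, pick $\alpha$ slightly above $1-e^{-\eta}$ so that the conditional mean exceeds $c$, and apply the conditional $c=0$ Theorem; it yields $r^*$ below the conditional mean minus $c$, hence positive surplus. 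If $c\ge\bar c$, every feasible $\alpha>1-e^{-\eta}$ has conditional mean $\le c$, so conditional surplus, and hence total surplus, is zero.

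\textbf{Main obstacle.} I expect the main obstacle to be the converse (sufficiency) step: showing that the two constraints $\Gamma\ge\max\{\mu-c,\alpha r\}$ guarantee a Bayes-plausible construction. This requires choosing the split of $F$ above and below $c$ jointly with $B$. I would first try taking the event $\{\tilde s>c\}$ as an upper interval of $v$, plus possibly a randomized boundary, and then invoke an intermediate-value argument in $B$.
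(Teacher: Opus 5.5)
Your reduction via Lemma~\ref{lem:conditioning} and the replacement, your derivation of $\Gamma_{\eta}$ as an upper bound on total surplus, and your threshold argument all match the paper. The gap is the achievability step you flag as the main obstacle, and the route you propose there does not work in general.

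If $\{\tilde s>c\}$ is taken to be the upper $\alpha$-quantile of $v$, two things are forced. Total surplus equals $\alpha(m_{\alpha}-c)$, where $m_{\alpha}=\mathbb{E}[v\mid v\geq F^{-1}(1-\alpha)]$. The trading branch must also lie inside $[F^{-1}(1-\alpha),1]$. Now evaluate the objective defining $\Gamma_{\eta}$ at $x_{\alpha}=F^{-1}(1-\alpha)-c$. This gives $\Gamma_{\eta}(\alpha,r)\leq\alpha(m_{\alpha}-c)$. The inequality is strict whenever the minimum is attained only at points $x\geq r$, and in particular whenever $x_{\alpha}>r$, since then $I_{\eta^{*}(\alpha)}(r/t)<1$ on $(r,x_{\alpha}]$. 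But $\Gamma_{\eta}(\alpha,r)$ bounds total surplus for \emph{every} Bayes-plausible $G$ that holds conditional profit to $r$. So an upper-interval construction with parameter $r$ can be Bayes plausible only at pairs where $\Gamma_{\eta}(\alpha,r)=\alpha(m_{\alpha}-c)$. Hence the constraints $\Gamma_{\eta}\geq\max\{\mu-c,\alpha r\}$ do not guarantee such a split, and nothing in your plan shows the maximizer is one of those exceptional pairs. Relatedly, the mean that $B$ must match is $\Gamma_{\eta}(\alpha,r)/\alpha$, not the mean of a prespecified piece of $F$.

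The missing idea is that no split of $F$ needs to be chosen at all. Your instinct to let minimality over $x$ in $\Gamma_{\eta}$ play the role of $\min_{x}D_{\eta}(r,x)-x\geq 0$ is right, but it must be applied to the unconditional $G$ against the unconditional $F$. The paper writes $G$ down directly as two pieces:
\begin{itemize}
\item a no-trade atom of mass $1-\alpha$ at $c+\lambda$, with $\lambda=(\mu-c-\Gamma_{\eta}(\alpha,r))/(1-\alpha)$;
\item with mass $\alpha$, the twisted distribution on $[c+r,c+B]$, where $B$ solves $\Gamma_{\eta}(\alpha,r)=\alpha r+\alpha\int_{r}^{B}I_{\eta^{*}(\alpha)}(r/t)dt$.
\end{itemize}
Such a $B$ exists by the intermediate value theorem, using $\Gamma_{\eta}\geq\alpha r$ and the bound from $x=1-c$; that bound also forces $r\leq 1-c$. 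The mean is $\mu$ by construction. The constraint $\Gamma_{\eta}\geq\mu-c$ gives $\lambda\leq 0$, so the atom really is a no-trade signal. Evaluating at $x=-c$ gives $\Gamma_{\eta}\leq\mu-\alpha c$, i.e.\ $\lambda\geq -c$.

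Bayes plausibility is then checked region by region:
\begin{itemize}
\item For $x\in[\lambda,r]$, $\int_{0}^{c+x}G(s)ds=\Gamma_{\eta}(\alpha,r)-(\mu-c)+(1-\alpha)x$. Here $\int_{0}^{c+x}G\leq\int_{0}^{c+x}F$ is exactly the statement that $\Gamma_{\eta}(\alpha,r)$ is at most the $x<r$ branch of the objective at $x$.
\item On $[r,B]$, the inequality is the $x\geq r$ branch in the same way.
\item For $x<\lambda$ it is trivial, and for $x>B$ it follows from the mean.
\end{itemize}
The split you were trying to build then exists automatically from the integral characterization of Bayes plausibility. Whenever $\Gamma_{\eta}(\alpha,r)<\alpha(m_{\alpha}-c)$, that split is not an upper interval: the pooled signal's mean exceeds $\mathbb{E}[v\mid v<F^{-1}(1-\alpha)]$.
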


\noindent Remarkably, the probability of the lowest signal above $c$ is $\alpha \cdot(1-e^{-\eta^{*}(\alpha)})=1-e^{-\eta}$, which is independent of $c$. An additional atom might arise with costs, as in the case of $\eta=0$. However, the shape of the distribution $G$ conditional on trade also depends on costs through the disagreement level $\eta^{*}(\alpha^{*})$. With disagreement, costs further influence the shape by increasing the effective amount of disagreement conditional on trade, from $\eta$ to $\eta^{*}(\alpha)$. Thus, the probability of trade also determines the disagreement the buyer must accommodate conditional on trade. Since $\eta^{*}(\alpha)$ decreases in $\alpha$, a larger trading branch reduces this conditional disagreement.

\section{Additional Discussion and Results} \label{sec:extensions}

As is common with buyer-optimal learning, there is typically no closed form solution for the optimal price itself, although this can be calculated numerically. However, I can obtain an explicit characterization of how much the buyer is hurt by disagreement in the small $\eta$ limit: 

\begin{corollary} 
Suppose $c=0$. Let $r_{\eta}^{*}$ denote the seller's profit with disagreement $\eta$. Then there exists a positive and finite constant $k^{*}$ such that: 

\begin{equation*} 
r_{\eta}^{*} - r_{0}^{*} = k^{*} \sqrt{\eta} + o(\sqrt{\eta}). 
\end{equation*}
\end{corollary}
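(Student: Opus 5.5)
The plan is to treat $r^{*}_{\eta}$ as the root of a scalar equation and perturb it in $t:=\sqrt{\eta}$ rather than in $\eta$. From Section~\ref{sec:c0} (with $\underline{v}=0$, $\bar v=1$), $r^{*}_{\eta}$ is the smallest $r$ with $\Phi(r,\eta):=\min_{x\in[r,1]} D_{\eta}(r,x)-x\ge 0$, and minimality plus continuity make the constraint bind: $\Phi(r^{*}_{\eta},\eta)=0$. I will also use that $D_\eta(r,x)=r+\int_0^xF+\int_r^x I_\eta(r/p)\,dp$ is increasing in $r$ (the integrand $I_\eta(r/p)$ is increasing in $r$, and $\partial_r$ of the first and last terms together is $1-I_\eta(1)+\int_r^x \partial_r I_\eta(r/p)\,dp\ge 0$). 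Hence $\Phi(\cdot,\eta)$ is nondecreasing, and the root is pinned down by a first-order expansion of $\Phi$ in both arguments around $(r^{*}_0,0)$.

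\textbf{Step 1: expand the disagreement inversion.} Since $d_{KL}(y\|z)=\frac{(y-z)^2}{2y(1-y)}+O(|y-z|^3/(y(1-y))^2)$, solving $d_{KL}(y\|I_\eta(y))=\eta$ gives
\begin{equation*}
I_\eta(y)=y-\sqrt{2\eta\,y(1-y)}+R_\eta(y).
\end{equation*}
I will show that the \emph{integrated} remainder is $O(\eta)$, namely $\int_r^x |R_\eta(r/p)|\,dp=O(\eta)$ uniformly for $r$ near $r^{*}_0$. The pointwise expansion degenerates near $y=1$ (where $I_\eta(1)=e^{-\eta}$, so $y-I_\eta(y)$ is $O(\eta)$ rather than $O(\sqrt\eta)$) and near $y=0$. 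Both endpoint regions, however, have $p$-measure $O(\eta)$ in the integral, because $y=r/p$ with $p\in[r,x]$. On these regions I will bound $|I_\eta(y)-y|$ and $\sqrt{2\eta y(1-y)}$ directly, using $I_\eta\le y$ and monotonicity of $d_{KL}$. This gives
\begin{equation*}
D_\eta(r,x)=D_0(r,x)-t\sqrt2\,J(r,x)+O(t^2),\qquad J(r,x):=\int_r^x\sqrt{\tfrac rp\bigl(1-\tfrac rp\bigr)}\,dp,
\end{equation*}
uniformly in $x$ and in $r$ near $r^{*}_0$.

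\textbf{Step 2: envelope and implicit-function argument.} Let $X_0$ be the argmin set of $D_0(r^{*}_0,x)-x$. As in the proof of the $c=0$ theorem, every $x_0\in X_0$ satisfies $x_0>r^{*}_0$: if $x_0=r^{*}_0$, the binding condition would force $\int_0^{r^{*}_0}F=0$, which is impossible since $F$ has full support and $r^*_0>0$. A Danskin-type argument for the minimum of the jointly perturbed function then gives
\begin{equation*}
\Phi(r,\eta)=\min_{x_0\in X_0}\Bigl[(r-r^{*}_0)\log(x_0/r^{*}_0)-t\sqrt2\,J(r^{*}_0,x_0)\Bigr]+o(|r-r^{*}_0|+t).
\end{equation*}
Here I use $\partial_r[r+\int_r^x (r/p)\,dp]=\log(x/r)$ at $\eta=0$, while the $\int_0^xF$ term does not depend on $r$. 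Setting this to zero and using monotonicity of $\Phi$ in $r$ to select the root yields
\begin{equation*}
k^{*}=\max_{x_0\in X_0}\frac{\sqrt2\,J(r^{*}_0,x_0)}{\log(x_0/r^{*}_0)}.
\end{equation*}
Indeed, $\Phi(r^{*}_0+\kappa t,\eta)\ge 0$ to first order iff $\kappa\log(x_0/r^*_0)\ge\sqrt2J$ for all $x_0\in X_0$. Then $k^{*}$ is positive and finite because $x_0>r^{*}_0$ makes both $J$ and the logarithm strictly positive, and $X_0$ is compact and bounded away from $r^{*}_0$.

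\textbf{Main obstacle.} The delicate point is making the Danskin step rigorous. This requires showing that minimizers of $D_\eta(r,x)-x$ for $(r,\eta)$ near $(r^*_0,0)$ lie near $X_0$ (by upper hemicontinuity of the argmin), and that the $o(\cdot)$ terms are uniform. I will first prove the two one-sided bounds separately. The upper bound on $r^*_\eta$ comes from checking the candidate $r=r^*_0+(k^*+\epsilon)t$ for all $x$, splitting $x$ into a neighborhood of $X_0$ and its complement, where $D_0-x$ exceeds its minimum by a fixed margin. The lower bound comes from plugging a single maximizing $x_0$ into the constraint at $r=r^*_0+(k^*-\epsilon)t$. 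Together with the uniform remainder estimate from Step~1, these give the claimed $k^*\sqrt\eta+o(\sqrt\eta)$.
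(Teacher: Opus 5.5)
Your proposal is correct and follows essentially the paper's route: expand $I_\eta(y)=y-\sqrt{2\eta y(1-y)}+\ldots$, and apply an envelope (Danskin/Milgrom--Segal) argument to $\min_x D_\eta(r,x)-x$, using that the $\eta=0$ minimizers $X_0$ lie strictly above $r_0^*$. You then sandwich $r^*_\eta$ between $r_0^*+(k^*\pm\epsilon)\sqrt\eta$ via monotonicity in $r$, which gives the same $k^*=\max_{x_0\in X_0}\sqrt2\,J(r_0^*,x_0)/\log(x_0/r_0^*)$. The differences are cosmetic: you aim for an $O(\eta)$ integrated remainder, whereas the paper only needs $o(\sqrt\eta)$, which it gets from Pinsker's bound $0\le (q-I_\eta(q))/\sqrt\eta\le 1/\sqrt2$ plus dominated convergence; and you perturb jointly in $(r,\sqrt\eta)$ rather than along rays $r=r_0^*+k\sqrt\eta$.
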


\noindent Thus, on the one hand, the impact of disagreement on profit is large relative to the size of disagreement itself, measured by $\eta$. On the other hand, despite the discontinuity exhibited by the equal-profit distribution (\ref{eq:RSInfo}), ultimately a small modification in the induced demand curve ensures only a small amount of surplus is lost if disagreement is small. 

Next, I discuss the role of tiebreaking in the proof, which ensures the seller charges a price which gives the buyer all residual surplus. For any other price $p$ to deliver profit $r$ under some $H_{p}$, $H_{p}$ must satisfy $\mathbb{P}_{H}[s \geq p] \geq r/p$, while $\mathbb{P}_{G_{r}}[s \geq p]=I_{\eta}(r/p) < r/p$. By Step One: 

\begin{equation*} 
D_{KL}(H_{p}||G) \geq d_{KL}(r/p|| I_{\eta}(r/p)) = \eta.
\end{equation*}

\noindent But since $H_{p}$ is admissible, $\eta$ is also an upper bound on the KL-divergence so this must hold with equality. In other words, the solution to the problem derived in Step One---which depends on $p$---is precisely the one which makes the seller indifferent between $p$ and $r$. 

But with this in mind, consider the set of \emph{feasible payoff profiles} over all $G$, again relaxing the assumption that ties are broken in favor of the buyer. Specifically, I consider payoff pairs supported by an admissible seller belief that attains the buyer's worst-case surplus given some tiebreaking rule. It is immediate that changing the tiebreaking rule cannot make the buyer \emph{better} off. More generally, define $BR(H) = \argmax_{p} p \cdot \mathbb{P}_{H}[s \geq p]$, and for any $t : \Delta([0,1]) \rightarrow [0,1]$ satisfying $t(H) \in BR(H)$, define

\begin{equation*} 
\mathcal{H}_{\eta}(G,t) = \argmin_{H : D_{KL}(H||G) \leq \eta } \int_{t(H)}^{1}(s-t(H))dG(s). 
\end{equation*}

\noindent and 
\begin{equation*}  \mathcal{P}_{\eta}=\{(U, \pi) :  \exists G, t, H \in \mathcal{H}_{\eta}(G,t), \text{ with } U=\int_{t(H)}^{1} (s-t(H))dG(s), \pi=t(H)\mathbb{P}_{H}[s \geq t(H)] \}. \end{equation*} The previous paragraph implies that the set of payoff pairs satisfying $\pi \geq r^{*}, U \geq 0$ and $U+\pi \leq \mu$ are contained in $\mathcal{P}_{\eta}$: To see this, take $G_{\pi}$ and use the $H_{\pi}$ delivering $U$, breaking ties against the buyer for that $H_{\pi}$ but in favor for all others. Since all prices in the support are weakly optimal for some $H_{\pi}$, this construction gives a triangle of feasible payoff profiles.

However, while these inequalities characterize feasible outcomes exactly when $\eta =0$, the same no longer holds with disagreement: 

\begin{proposition} 
Suppose $c=0$. For all $\eta > 0$, $\mathcal{P}_{\eta} \not\subset \mathcal{P}_{0}$ and $\mathcal{P}_{0} \not\subset \mathcal{P}_{\eta}$. 
\end{proposition}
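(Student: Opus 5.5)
The plan is to exhibit one payoff pair in $\mathcal{P}_{0}\setminus\mathcal{P}_{\eta}$ and one in $\mathcal{P}_{\eta}\setminus\mathcal{P}_{0}$. Throughout I use the fact, stated just before the proposition, that for $\eta=0$ the feasible set is exactly $\mathcal{P}_{0}=\{(U,\pi): \pi\geq r_{0}^{*},\ U\geq 0,\ U+\pi\leq \mu\}$, where $r^{*}_{0}$ is the \cite{RS} profit level.

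\emph{Step 1: $\mathcal{P}_{0}\not\subset\mathcal{P}_{\eta}$.} I would use the corner point $(\mu-r_{0}^{*},r_{0}^{*})\in\mathcal{P}_{0}$. Fix any $(U,\pi)\in\mathcal{P}_{\eta}$ with its $G$, $t$ and $H\in\mathcal{H}_{\eta}(G,t)$.
\begin{itemize}
\item Since $H$ minimizes buyer surplus under the rule $t$, $U$ is at most the buyer's worst-case surplus under $G$ when ties are broken in her favor.
\item By the $c=0$ Theorem, that worst-case surplus is at most $\mu-r_{\eta}^{*}$.
\end{itemize}
It therefore suffices to show $r_{\eta}^{*}>r_{0}^{*}$. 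For $y\in(0,1)$ we have $d_{KL}(y\|y)=0<\eta$, and $d_{KL}(y\|\cdot)$ is decreasing below $y$, so $I_{\eta}(y)<y$. Hence $D_{\eta}(r,x)<D_{0}(r,x)$ for every $x>r$.

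At $r_{0}^{*}$ the constraint $\min_{x}D_{0}(r_{0}^{*},x)-x\geq 0$ binds at some $x_{0}^{*}$. The argument in Section~\ref{sec:c0} (run with $\eta=0$) shows $x_{0}^{*}>r_{0}^{*}$, since $x_{0}^{*}=r_{0}^{*}$ would force $\int_{\underline v}^{r_0^*}F=0$, which is impossible because $F$ has full support on $[0,1]$ and $r_0^*>0$. So $D_{\eta}(r_{0}^{*},x_{0}^{*})-x_{0}^{*}<0$, i.e.\ $r_{0}^{*}$ is infeasible in the $\eta$-problem. For $r<r_{0}^{*}$ the $\eta$-constraint is also infeasible, because $D_{\eta}\leq D_{0}$ and $r<r_0^*$ already fails the $\eta=0$ constraint. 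Therefore $r_{\eta}^{*}>r_{0}^{*}$ (the minimum is attained), and every $(U,\pi)\in\mathcal{P}_{\eta}$ has $U\leq\mu-r^{*}_{\eta}<\mu-r_{0}^{*}$.

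\emph{Step 2: $\mathcal{P}_{\eta}\not\subset\mathcal{P}_{0}$.} I would find a point of $\mathcal{P}_{\eta}$ with $U+\pi>\mu$, exploiting that $\pi$ is the seller's \emph{subjective} profit. Choose $g\in[e^{-\eta},1)$ and $q=F^{-1}(1-g)\in(0,1)$. Let $G$ be the binary partition signal with
\[
a=\mathbb{E}[v\mid v<q],\qquad b=\mathbb{E}[v\mid v\geq q],\qquad \mathbb{P}_{G}[s=b]=g.
\]
This $G$ is Bayes plausible, as it is generated by a partition. Take $H=\delta_{b}$. Then $D_{KL}(H\|G)=-\log g\leq\eta$, so $H$ is admissible. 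Under $H$ the unique optimal price is $b$, so $t(H)=b$ for every tiebreaking rule. The buyer's surplus at price $b$ is $0$, which is the least possible value, so $H\in\mathcal{H}_{\eta}(G,t)$. The resulting pair is $(U,\pi)=(0,b)$. Since $a<b$ and $\mu=(1-g)a+gb$, we get $\pi=b>\mu$, so $U+\pi>\mu$ and $(0,b)\notin\mathcal{P}_{0}$.

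\emph{Main obstacle.} Step 2 is mechanical once the point-mass belief $H=\delta_b$ is chosen. The delicate part is Step 1, specifically the strictness $r_{\eta}^{*}>r_{0}^{*}$. It needs the binding point $x_{0}^{*}$ to lie strictly above $r_{0}^{*}$, and this should come out of the same argument already used in Section~\ref{sec:c0}.
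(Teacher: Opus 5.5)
Your proof is correct. Step 1 is the paper's own argument: $I_\eta(y)<y$ at a binding point $x_0^*>r_0^*$ gives $r^*_\eta>r^*_0$, so the $\eta=0$ buyer-optimal vertex $(\mu-r_0^*,r_0^*)$ cannot be attained. You also make explicit two points the paper leaves implicit. First, an unfavorable tiebreaking rule can only lower $U$. Second, every $r<r_0^*$ stays infeasible at $\eta$ because $D_\eta\le D_0$.

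Step 2 takes a genuinely different route. The paper perturbs the Roesler--Szentes structure by mixing in an $\varepsilon$-weight binary-partition component. It then lets Nature shift mass of order $\varepsilon$ toward the new low signal and toward $B$. This makes $B$ strictly optimal while pushing the seller's perceived profit below $r_0$. The result is a pair $(0,\pi)$ with $\pi<r_0$, which violates the floor $\pi\ge r_0^*$; admissibility follows from a limit, $D_{KL}(H\|G_\varepsilon)\to 0$.

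You instead put all of Nature's mass on the high cell of a binary partition. This costs exactly $-\log g\le\eta$, makes $b$ the unique best response, and yields $(0,b)$, which violates $U+\pi\le\mu$. Your construction is exact and more elementary: it needs no limits and only a one-line KL computation.

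However, it works only because $\pi$ is the seller's subjective profit under $H$. Evaluated under $G$, your pair becomes $(0,gb)$ with $gb=\int_q^1 v\,dF(v)\le\mu$. That point lies in the $\eta=0$ triangle whenever $gb\ge r_0^*$, for instance for small $\eta$, where $g$ is near one.

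The paper's point is the more substantive one: the seller's perceived profit falls below the no-disagreement floor while the buyer gets nothing. Its perturbation idea is what the Supplemental Appendix adapts to show the proposition survives when seller profit is evaluated with $G$.
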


\noindent That $\mathcal{P}_{0} \not\subset \mathcal{P}_{\eta}$ is intuitive as the buyer strictly does worse with disagreement. For the other noninclusion, I perturb the \cite{RS} information structure and seller belief so that the highest price becomes strictly optimal for the seller. This delivers zero buyer surplus while also lowering the seller's perceived probability of trade. I defer the formal description to the appendix.\footnote{Implicitly $\mathcal{P}_{\eta}$ uses $H$ to evaluate the seller's payoff. In the Supplemental Material, I modify the example to show the conclusion of the proposition remains valid if instead seller payoffs are evaluated with respect to $G$.} 

Lastly, I discuss the possibility of generalizing beyond KL divergences to parameterize disagreement. A natural generalization is to $f$-divergences, where for a convex function $f$ and any $H$ absolutely continuous with respect to $G$:
 \begin{equation*}
D_{f}(H|| G) =  \mathbb{E}_G \left [f\left (\frac{dH}{dG}\right) \right ],
\end{equation*}

\noindent where $\frac{dH}{dG}$ is the Radon-Nikodym derivative of $H$ with respect to $G$ (i.e., the function $m(s)$ satisfying $H(ds)=m(s)G(ds)$). The main reason why KL-divergence was useful in particular was due to the ability to invoke the KL-chain rule, used in Step One and in the Proof of Lemma  \ref{lem:conditioning}. But indeed, the same conclusion in Step One holds for $f$-divergences with a slightly more involved argument which I relegate to the Supplemental Appendix. The intuition is that since all that matters is the relative likelihood of the event that the buyer buys, the same bound can be achieved constructively by only adjusting the likelihood ratio between these two events. With this alternative, $I_{\tilde{\eta}}(x)$ would be defined using $d_f(x|| I_{\tilde{\eta}}(x))=\tilde{\eta}$ where: 
\begin{equation*}
d_f(x|| y)= y f(x/y) + (1-y)f((1-x)/(1-y)).
\end{equation*}
\noindent While $f$ would require some additional regularity conditions to ensure  $I_{\tilde{\eta}}(x)$ inherits the same monotonicity properties as with KL-divergence,\footnote{For instance, $f(1)=0, f''>0, f(0) < \infty$, $\lim_{t \rightarrow \infty} \frac{f(t)}{t} = \infty$, with $f \in C^{2}(0,\infty)$ continuous at 0. I verify that these conditions ensure the required conditions on $I_{\tilde{\eta}}$ are satisfied.} the same argument goes through. This transformation thus generalizes the theorem for the $c=0$ case.

The chain rule application in Lemma \ref{lem:conditioning}, used in the $c > 0$ case---which applies it while varying the probability of purchase---is harder to dispense with since the precise formula plays a role. To be sure, this step could be modified using analogous identities for other $f$-divergences. But these might change how $\eta^*(\alpha)$ varies in $\alpha$ and hence require the theorem to be restated. 

\newpage

\section{Detailed Proofs} \label{sec:proofs}

\begin{proof}[Proof of Lemma \ref{lem:bound}] 
Suppose the buyer attempts to hold the seller to profit at most $r$ when profit is evaluated according to $H$.  Since the seller chooses $p$ optimally, this requires $ \mathbb{P}_{H}[s \geq p] \leq r/p,$ for all $p > r$ and admissible $H$; for $p \leq r$ the restriction is automatic. Step One showed that given $G$ and $p$, the minimum KL ``cost of changing the probability of trade'' from $\mathbb{P}_{G}[s \geq p]$ to $\mathbb{P}_{H}[s \geq p]$ is $d_{KL}(\mathbb{P}_{H}[s \geq p] ||  \mathbb{P}_{G}[s \geq p])$; such a change is admissible whenever this cost is at most $\eta$.   Now, if $\mathbb{P}_{G}[s \geq p] > r/p$, then the seller achieves profit greater than $r$ even when $H=G$. Otherwise, since $d_{KL}(x|| y)$ is decreasing in $y$ for $y\leq x$, the probability $r/p$ can be reached by some $H$ if and only if $\mathbb{P}_{G}[s \geq p]\geq I_{\eta}(r/p)$. To see this, note that at equality the definition of $I_{\eta}(r/p)$ implies that making trade occur with probability $r/p$ requires spending the entire entropy budget; if the inequality is strict the seller can assign even more probability to trade. The Lemma follows. 
\end{proof}

\begin{proof}[Proof of Lemma \ref{lem:BP}]
Let $B$ solve:

\begin{equation*}
r+\int_{r}^{B}I_{\eta}(r/p)dp=\mu.
\end{equation*}

\noindent Some $B \in (r,\overline{v}]$ exists, since the left hand side is continuous and strictly increasing in $B$ on this range, less than $\mu$ at $B=r$ if the seller does not get full surplus, and at least as large as $\mu$ at $B=\bar{v}$ by (\ref{mainsuffeq}). When $r=\mu$ take $B=\mu$.

It follows that $G_{r}$ is Bayes plausible, since for $x \in [r,B]$:

\begin{equation*}
 \int_{\underline{v}}^{x} G_{r}(s)ds= \int_{r}^{x} [1- I_{\eta}(r/s)]ds=x-r-\int_{r}^{x} I_{\eta}(r/p)dp \leq \int_{\underline{v}}^{x} F(s)ds,
\end{equation*}

\noindent where the inequality uses the condition $D_{\eta}(r,x)-x \geq 0$. Furthermore, the fact that the mean of $G_{r}$ is $\mu$ implies the same inequality holds for $x > B$, and the inequality holds trivially since CDFs are nonnegative when $x < r$. Hence Bayes plausibility holds. Whenever $p > r$, $\mathbb{P}_{G_{r}}[s \geq p] \leq I_{\eta}(r/p)$ by construction, so Lemma \ref{lem:bound} implies $p \mathbb{P}_{H}[s \geq p] \leq r$ for every admissible $H$. For $p \leq r$ the bound is automatic, while at $p=r$ absolute continuity and $\mathbb{P}_{G_{r}}[s \geq r]=1$ imply $\mathbb{P}_{H}[s \geq r]=1$ for every admissible $H$, so that charging $r$ yields exactly the desired profit level $r$. 
\end{proof} 

\begin{proof}[Proof of the Price inducibility Claim]  I show that if \begin{equation*}
\mathbb{P}_{G}[s \geq p] \geq I_{\eta}\left(\frac{r^{*}}{p-p^{\dagger}+r^{*}}\right)
\end{equation*} for some $p > p^{\dagger}$, there is an $H$ under which $p$ strictly dominates every price no greater than $p^{\dagger}$. There are two cases. If \begin{equation*} 
\mathbb{P}_{G}[s \geq p] \geq \frac{r^{*}}{p-p^{\dagger}+r^{*}},\end{equation*}

\noindent then $H=G$ works. Indeed, if $q \leq p^{\dagger}$, then $q \mathbf{1}[s \geq q] \leq \min \{s, p^{\dagger}\}$ implies $q \mathbb{P}_{G}[s \geq q] \leq p^{\dagger} - \int_{0}^{p^{\dagger}} G(s)ds =r^{*}$, by the definition of $p^{\dagger}$. But $p$ achieves at least $p \frac{r^{*}}{p-p^{\dagger}+r^{*}}$, and hence strictly more than $r^{*}$ since $p^{\dagger} > r^{*}$.  Thus, the main case of interest is: 

\begin{equation*}
\frac{r^{*}}{p-p^{\dagger}+r^{*}}>\mathbb{P}_{G}[s \geq p] \geq I_{\eta}\left(\frac{r^{*}}{p-p^{\dagger}+r^{*}}\right)
\end{equation*}

\noindent Now, the definition of disagreement inversion implies that $d_{KL} \left( \frac{r^{*}}{p-p^{\dagger}+r^{*}} || I_{\eta}\left(\frac{r^{*}}{p-p^{\dagger}+r^{*}}\right) \right) = \eta$; and since $d_{KL}$ is decreasing in its second argument, Step One implies there is an admissible $H$ such that the probability of sale is $\frac{r^{*}}{p-p^{\dagger}+r^{*}} $. Again the seller's profit is at least $r^{*}$ for such an $H$, and furthermore, $H$ is a rescaling of $G$. That implies, for $s \leq p^{\dagger}$, 

\begin{equation*} 
H(s) = \frac{ 1- \frac{r^{*}}{p-p^{\dagger}+ r^{*}}}{ 1 - \mathbb{P}_{G}[s \geq p]}G(s). 
\end{equation*}
Integrating and using the identity that $\int_{0}^{p^{\dagger}} G(s)ds=p^{\dagger}-r^{*}$, 

\begin{equation*} 
\int_{0}^{p^{\dagger}} H(s)ds =\frac{ 1- \frac{r^{*}}{p-p^{\dagger}+ r^{*}}}{ 1 - \mathbb{P}_{G}[s \geq p]}(p^{\dagger}-r^{*}).
\end{equation*}

\noindent Yet the same argument as in the first case implies $q \mathbb{P}_{H}[s \geq q] \leq p^{\dagger} - \int_{0}^{p^{\dagger}} H(s)ds$. Subtracting the seller profit at $p$ from both sides and substituting in for the integral: 

\begin{align*} 
q \mathbb{P}_{H}[s \geq q] - p \mathbb{P}_{H}[s \geq p] &\leq p^{\dagger}-\frac{ 1- \frac{r^{*}}{p-p^{\dagger}+ r^{*}}}{ 1 - \mathbb{P}_{G}[s \geq p]}(p^{\dagger}-r^{*})-p \frac{r^{*}}{p-p^{\dagger}+r^{*}} \\ & =- \left( 1- \frac{r^{*}}{p-p^{\dagger}+ r^{*}} \right)(p^{\dagger} - r^{*}) \frac{ \mathbb{P}_{G}[s \geq p]}{1- \mathbb{P}_{G}[s \geq p]} < 0. 
\end{align*}

\noindent Hence, $p$ achieves more profit given $H$ than any $q \leq p^{\dagger}$. So every seller best response under this $H$ exceeds $p^{\dagger}$, and thus gives the buyer less than $\mu-r^{*}$. 
\end{proof}

\begin{proof}[Proof of Lemma \ref{lem:conditioning}]

If $\alpha=0$, buyer surplus is 0 because $s \leq c$ almost surely, while if $\alpha=1$ absolute continuity implies $\mathbb{P}_{H}[s > c]=1$ for every admissible $H$; hence $D_{KL}(H||G) = D_{KL}(H||G;s > c)$, and the claimed reduction holds with $\eta^{*}(1)=\eta$. For the remaining cases, take $0 < \alpha < 1$. First suppose $\alpha > 1-e^{-\eta}$. Let $\hat{\alpha}= \mathbb{P}_{H}[s > c]$; $\hat{\alpha} > 0$ since $d_{KL}(0||\alpha) = - \log(1- \alpha) > \eta$. But since $\hat{\alpha} > 0$, some price strictly above $c$ yields strictly positive profit, so every optimal price satisfies $p > c$.  Seller profit is $\hat{\alpha}(p-c) \mathbb{P}_{H}[ s \geq p \mid s > c]$, so the profit-maximizing $p$ is independent of $\hat{\alpha}$ and $H$ conditional on $s \leq c$. Applying the KL chain rule: 

\begin{equation*} 
D_{KL}(H || G) =  d_{KL}(\hat{\alpha} ||  \alpha) + \hat{\alpha}D_{KL}(H || G; s > c)+(1-\hat{\alpha})D_{KL}(H || G; s \leq c).
\end{equation*}

\noindent For any fixed choice of $H$ conditional on $s > c$, the least-costly implementation takes $H=G$ conditional on $s \leq c$ and minimizes the resulting KL cost over $\hat{\alpha}$, as this can only decrease $D_{KL}(H||G)$. Taking the FOC for $\hat{\alpha}$ yields: 

\begin{equation*} 
\min_{\hat{\alpha}} \overbrace{d_{KL}(\hat{\alpha}|| \alpha) + \hat{\alpha} D_{KL}(H || G ; s > c)}^{(*)} \Rightarrow \frac{\hat{\alpha}}{1- \hat{\alpha}}= e^{- D_{KL}(H || G ; s > c)} \frac{\alpha}{1- \alpha}.
\end{equation*}

\noindent Substituting in for $\hat{\alpha}$ fixing $D_{KL}(H || G ; s > c)$ yields that $(*)$ is $-\log(1- \alpha+\alpha e^{-D_{KL}(H || G ; s > c)})$, an expression that is strictly increasing in $D_{KL}(H || G; s > c)$. Hence, for this choice of $H$ conditional on $s>c$, the minimum entropy cost is at most $\eta$ if and only if
$D_{KL}(H||G;s>c) \leq \eta^*(\alpha)$. Thus, the choice of $H$ conditional on $s>c$ coincides with the problem where $\alpha=1$, but with the entropy radius modified accordingly. 

On the other hand, if $\alpha \leq 1- e^{-\eta}$ then for any $H$, 

\begin{equation*} 
-\log(1-\alpha+ \alpha e^{-D_{KL}(H || G; s > c)}) \leq -\log(1- \alpha) \leq \eta,
\end{equation*}

\noindent for every $H$ absolutely continuous with respect to $G$. In particular, for any $q$ below the supremum of the support conditional on $s > c$, let Nature choose $H$ so that conditional on $s > c$, $H$ is equal to $G$ further conditioned on $s \geq q$; $D_{KL}(H||G; s > c) < \infty$ and hence there is some such admissible $H$. Then the seller's optimal price is $q$ or greater; setting $q \rightarrow \sup \text{Supp} G$ drives buyer surplus to zero. 

Thus, for $\alpha > 1-e^{-\eta}$, fixing the distribution of $v$ induced by the original information structure conditional on $s > c$, the feasible conditional seller beliefs are exactly those in an $\eta^{*}(\alpha)$-entropy ball. By the remark from the main text, after shifting values and prices by $c$, the conditional problem is exactly the $c=0$ problem with $\eta=\eta^{*}(\alpha)$; since only signals above $c$ can trade at an optimal price $p > c$, unconditional buyer surplus and seller profit evaluated under $G$ are $\alpha$ times the conditional values.  \end{proof}

\begin{proof}[Completing the Proof of the $c > 0$ Theorem] First, I show that buyer surplus is positive if and only if $c < \bar{c}$ By Lemma \ref{lem:conditioning}, positive buyer surplus requires $\alpha > 1-e^{-\eta}$. By Lemma 1 in \cite{kolotilin}, maximizing the probability that $s \geq c$ is achieved by an information structure that tells the buyer if $v \geq \bar{v}$ where $\mathbb{E}[v \mid v \geq \bar{v}]=c$, with randomization at the cutoff if necessary; this probability can be arbitrarily well approximated by the probability that $s > c$ for some information structure. Hence a probability $\alpha$ can be generated only if 

\[
\frac{1}{\alpha} \int_{1-\alpha}^{1} F^{-1}(u)du > c.
\] If instead $c < \mu$, this condition holds immediately. Hence $c < \bar{c}$ is necessary.

For sufficiency, if $c < \mu$, applying the construction to $v-c$ delivers positive surplus immediately since the mean is positive. If instead $\mu \leq c < \bar{c}$, take $\alpha > 1-e^{-\eta}$ sufficiently close to $1-e^{-\eta}$ such that $\frac{1}{\alpha} \int_{1-\alpha}^{1} F^{-1}(u)du > c$. There exists an information structure generating a high signal with probability $\alpha$ and this conditional mean by revealing whether $v$ is in the upper-$\alpha$ quantile. Conditional on this event, $v-c$ has strictly positive mean; hence again the same construction delivers strictly positive surplus.

Lastly, I derive $\Gamma_{\eta}$. Define $z=s-c$ to be the posterior expectation net of costs, so $\mathbb{P}_G[z>0]=\alpha$ and $\mathbb{E}_G[z]=\mu-c$. I start with the following inequality
\begin{equation*}
z \leq \min\{x,r\}+ \max\{z-x,0\}+\int_{r}^{\max\{x,r\}} \mathbf{1}[z \geq t] dt,
\end{equation*}

\noindent which coincides with (\ref{eq:prelimineq}) when $x > r$ and reduces to $z \leq x +\max\{z-x,0\}$ otherwise. Taking expectations conditional on $z > 0$ and multiplying by $\alpha$, 

\begin{equation*}
\alpha \mathbb{E}_{G}[z \mid z > 0] \leq \alpha \min\{x,r\}+\alpha \mathbb{E}_{G}\left[\max\{z-x,0\} \mid z > 0 \right]+\alpha \int_{r}^{\max\{x,r\}}  \mathbb{P}_{G}[z \geq t \mid z > 0] dt.
\end{equation*}

Using $\mathbb{E}_{G}[\max\{z-x,0\}] = \mu - c- x + \int_{0}^{c+x}G(s)ds$ and Bayes plausibility: 

\begin{equation*} 
\alpha \mathbb{E}_{G}[\max\{z-x,0\} \mid z > 0] \leq \mathbb{E}_{G}[\max\{z-x,0\}] \leq \mu - c- x + \int_{0}^{c+x}F(s)ds.
\end{equation*}

\noindent Thus, with the modified Lemma \ref{lem:bound}, I obtain the following upper bound on $\alpha \mathbb{E}_{G}[z \mid z > 0]$: 

\begin{equation*} 
\alpha \mathbb{E}_{G}[z \mid z > 0] \leq \overbrace{\min_{x \in [-c,1-c]}  \alpha \min\{x,r\}+\mu - c- x + \int_{0}^{c+x}F(s)ds+\alpha \int_{r}^{\max\{x,r\}} I_{\eta^{*}(\alpha)}(r/t)dt.}^{\Gamma_{\eta}(\alpha,r)}
\end{equation*}

Now I construct the signal achieving the upper bound, taking $\alpha > 1-e^{-\eta}$ and $r > 0$ with $\Gamma_{\eta}(\alpha,r) \geq \max\{\mu-c, \alpha r\}$ (which is necessary as argued in the main text). If $\alpha=1$, the construction coincides with the Section \ref{sec:c0} problem. For $\alpha <1$, define the pooled shifted signal: 

\begin{equation*} 
\lambda = \frac{\mu-c- \Gamma_{\eta}(\alpha,r)}{1-\alpha} \leq 0.
\end{equation*}

\noindent Evaluating $\Gamma_{\eta}$ at $x=-c$ gives $\Gamma_{\eta}(\alpha,r) \leq \mu-\alpha c$, so $\lambda \in [-c,0]$ and the pooled posterior $c+\lambda$ lies in $[0,c]$. Also, $r \leq 1-c$ must hold, since if $r > 1-c$ then evaluating $\Gamma_{\eta}(\alpha,r)$ at $x=1-c$ yields: 
\begin{equation*} 
\Gamma_{\eta}(\alpha,r) \leq \alpha(1-c) < \alpha r,
\end{equation*}

\noindent contradicting $\Gamma_{\eta}(\alpha,r) \geq \alpha r$. Evaluating $\Gamma$ at $x=1-c$ thus yields: 

\begin{equation*} 
\Gamma_{\eta}(\alpha,r)  \leq \alpha r + \alpha \int_{r}^{1-c} I_{\eta^{*}(\alpha)}(r/t)dt.
\end{equation*}

Since $\Gamma_{\eta}(\alpha,r) \geq \alpha r$, continuity implies there exists some $B \in [r,1-c]$ such that: 

\begin{equation*} 
\Gamma_{\eta}(\alpha,r)  = \alpha r + \alpha \int_{r}^{B} I_{\eta^{*}(\alpha)}(r/t)dt. 
\end{equation*}

\noindent Consider the information structure where $s=c+\lambda$ is realized with probability $1-\alpha$ and otherwise uses the distribution in the theorem with parameters $(\alpha,r,B)$; write $G$ for the CDF of $s$ and $G_{\alpha,r}(z)=G(c+z)$ for the CDF of the shifted signal, so that $\int_{-c}^{x} G_{\alpha,r}(z)dz=\int_{0}^{c+x} G(s)ds$. It remains to check Bayes plausibility. For $x \in [ \lambda, r]$, 

\begin{equation*} 
\int_{-c}^{x} G_{\alpha,r}(z) dz= (1- \alpha)(x-\lambda) = \Gamma_{\eta}(\alpha,r)-(\mu-c)+(1-\alpha)x. 
\end{equation*}

\noindent Using the expression for $\Gamma_{\eta}(\alpha,r)$ at $x < r$,

\begin{equation*} 
\Gamma_{\eta}(\alpha,r) \leq \alpha x + \mu -c-x+\int_{0}^{c+x}F(s)ds.
\end{equation*}

\noindent Putting the previous two equations together, $\int_{0}^{c+x}G(s)ds \leq \int_{0}^{c+x} F(s)ds$. For $x \in [r,B]$, the same calculation, evaluated at $x \geq r$ yields: 

\begin{multline*} 
\int_{-c}^{x} G_{\alpha,r}(z) dz= \Gamma_{\eta}(\alpha,r)-(\mu-c) +(1-\alpha)x+ \alpha \int_{r}^{x} 1-I_{\eta^{*}(\alpha)}(r/t)dt \\  \Gamma_{\eta}(\alpha,r) \leq \alpha r + \mu -c-x+\int_{0}^{c+x}F(s)ds +\alpha \int_{r}^{x} I_{\eta^{*}(\alpha)}(r/t)dt.
\end{multline*}

\noindent Hence again $\int_{0}^{c+x} G(s)ds \leq \int_{0}^{c+x} F(s)ds$. For $x < \lambda$, $\int_{0}^{c+x} G(s)ds=0$, so this inequality holds, and for $x > B$ it follows from exactly the same argument as the proof of Lemma \ref{lem:BP}. Thus the construction is Bayes plausible; since the seller charges $c+r$, buyer surplus is $\Gamma_{\eta}(\alpha,r) - \alpha r$. 

Lastly, note that since $I_{\eta^{*}(\alpha)}(r/t) \leq I_{\eta^{*}(\alpha)}(1) = e^{-\eta^{*}(\alpha)}$: 

\begin{equation*}
\Gamma_{\eta}(\alpha,r) - \alpha r \leq \alpha \int_{r}^{1-c} I_{\eta^{*}(\alpha)}(r/t) dt \leq \alpha e^{-\eta^{*}(\alpha)},
\end{equation*}

\noindent which converges to 0 as $\alpha \rightarrow 1-e^{-\eta}$. Thus, whenever the optimal value is positive, the optimal value is bounded away from this boundary, which by the above is possible whenever $c < \bar{c}$. Together with $0 \leq r \leq 1-c$ and continuity, the maximum is attained. 
\end{proof}

\begin{proof}[Proof of the Corollary] 
I first show $I_{\eta}(x) =x -\sqrt{2x(1-x)} \sqrt{\eta} +o(\sqrt{\eta})$. This uses that $\frac{\partial}{\partial y} d_{KL}(x||y) = \frac{y-x}{y(1-y)}$. Using the Taylor expansion of $d_{KL}(x||x-\Delta)$ around $\Delta=0$, 

\begin{equation*} 
d_{KL}(x || x- \Delta) = \frac{\Delta^{2}}{2x(1-x)} +o(\Delta^{2}).
\end{equation*}

\noindent Setting $\Delta=x-I_{\eta}(x)$, so that $d_{KL}(x||x-\Delta)=\eta$, yields the claimed approximation.

Now, using the bound that $d_{KL}(x||y) \geq 2(x-y)^{2}$ (i.e., Pinsker's inequality for Bernoulli random variables), the definition of $I_{\eta}(q)$ implies: 

\begin{equation*} 
0 \leq \frac{q - I_{\eta}(q)}{ \sqrt{\eta}} \leq \frac{1}{\sqrt{2}}. 
\end{equation*}

\noindent Dominated convergence and the previous approximation imply: 

\begin{equation*} 
\frac{1}{\sqrt{\eta}} \int_{r}^{x} \frac{r}{p} - I_{\eta}(r/p) dp \rightarrow \int_{r}^{x}\sqrt{ 2\frac{r}{p}\left(1 - \frac{r}{p} \right)} dp
\end{equation*}

\noindent Letting $\Phi_{\eta}(r,x)=D_{\eta}(r,x)-x$, using this approximation for $I_{\eta}(r/p)$ I obtain: 

\begin{equation*} 
\Phi_{\eta}(r,x)=\Phi_{0}(r,x) - \sqrt{\eta} \int_{r}^{x}\sqrt{ 2\frac{r}{p}\left(1 - \frac{r}{p} \right)} dp  +o(\sqrt{\eta}).  
\end{equation*}

Let $V_{\eta}(k)=\min_{x}\Phi_{\eta}(r_{0}+k\sqrt{\eta},x)$ and $X_{0}$ denote the set of minimizers at $\eta=0$. Note that $x > r_{0}$ at any minimizer, since:

\begin{equation*} 
\Phi_{0}(r_{0},r_{0})= \int_{0}^{r_{0}} F(s) ds > 0, 
\end{equation*}

\noindent while $V_{0}(k)=0$. Since $r_{0} > 0$, the aforementioned convergence is uniform for $r$ in a neighborhood of $r_{0}$ and $x \in [r,1]$; hence, the $o(\sqrt{\eta})$ remainder can be taken to be uniform over the relevant minimization. Thus, applying the same approximation for $\Phi_{\eta}(r,x)$ along $r=r_{0} + k \sqrt{\eta}$, the envelope theorem \cite[Corollary 4]{milgrom2002envelope} implies:

\begin{equation*} 
\lim_{\eta \rightarrow 0} \frac{V_{\eta}(k) - V_{0}(k)}{\sqrt{\eta}}= \min_{x \in X_{0}} k \log(x/r_{0}) -\int_{r_{0}}^{x}\sqrt{ 2\frac{r_{0}}{p}\left(1 - \frac{r_{0}}{p} \right)} dp   .
\end{equation*}

\noindent Set $k^{*}= \max_{x \in X_{0}} \frac{\int_{r_{0}}^{x}\sqrt{ 2\frac{r_{0}}{p}\left(1 - \frac{r_{0}}{p} \right)} dp}{\log(x/r_{0})}$, which is finite by interiority of any minimizing $x \in X_{0}$. Now, note that $\min_{x \geq r} \Phi_{\eta}(r,x)$ is nondecreasing in $r$; indeed, for each $x$: 
\begin{equation*} 
D_{\eta}(r_{2},x)-D_{\eta}(r_{1},x) = (r_{2} - r_{1}) - \int_{r_{1}}^{r_{2}} I_{\eta}(r_{1}/p) dp + \int_{r_{2}}^{x} [I_{\eta}(r_{2}/p) - I_{\eta}(r_{1}/p)] dp,
\end{equation*} 

\noindent which is nonnegative because $I_{\eta} \leq 1$ and $I_{\eta}$ is increasing. Hence, as $r$ increases, the objective in the minimization problem increases for all $x$, while the set of feasible $x$ shrinks. Now, $k < k^{*}$ implies for all $\eta$ sufficiently small, $V_{\eta}(k) < 0$; likewise, $k > k^{*}$ implies for all $\eta$ sufficiently small, $V_{\eta}(k) > 0$. Since $r_{\eta}^{*}$ is the threshold value for $r$ at which $V_{\eta}(k)$ becomes nonnegative, it follows that for all $\varepsilon > 0$, 

\begin{equation*} 
r_{0} + (k^{*} - \varepsilon) \sqrt{\eta} < r_{\eta} < r_{0} + (k^{*} + \varepsilon) \sqrt{\eta},
\end{equation*}

\noindent whenever $\eta$ is sufficiently small. Since $\frac{r_{\eta} - r_{0}}{\sqrt{\eta}} \rightarrow k^{*}$ for some fixed $k^{*}$, the corollary holds.\end{proof}

\begin{proof}[Proof of the Proposition]  For $\mathcal{P}_{0} \not\subset \mathcal{P}_{\eta}$, recall that $x^{*} > r^{*}$ if $r^{*} < \mu$. Let $r_{\eta}$ denote the value of $r^{*}$ as a function of $\eta$ and let $x_{0}$ denote any value of $x^{*}$ at $\eta=0$. Now, $I_{0}(r_{0}/p)=r_{0}/p> I_{\eta}(r_{0}/p)$ for any $p \in (r_{0}, x_{0}]$. So, $D_{\eta}(r_{0},x_{0}) - x_{0} < D_{0}(r_{0},x_{0})-x_{0}=0$, and hence $r_{\eta} > r_{0}$; by the theorem, $\mu-r_{\eta} < \mu - r_{0}$, so the buyer-optimal vertex of the $\eta=0$ case cannot belong to $\mathcal{P}_{\eta}$. 

For $\mathcal{P}_{\eta} \not\subset \mathcal{P}_{0}$, letting $G_{RS}$ denote the Roesler-Szentes information structure, define: 

\begin{equation*} 
G_{\varepsilon} =(1- \varepsilon) G_{RS} + \varepsilon (\lambda \delta_{\mathbb{E}[v \mid v \leq x]} + (1-\lambda) \delta_{\mathbb{E}[v \mid v > x]}),
\end{equation*}

\noindent for some $x$ such that $\mathbb{E}[v \mid v \leq x] < r_{0} < \mu < \mathbb{E}[v \mid v > x] <B$; note this holds whenever $x$ is sufficiently small and $\lambda= \mathbb{P}[v \leq x]$, and that this distribution is Bayes plausible with maximum posterior equal to $B$. Now for $\kappa, \tau$ sufficiently small, let $H=(1- \kappa - \tau) G_{RS} + \kappa \delta_{\mathbb{E}[v \mid v \leq x]} + \tau \delta_{B}$.

Seller profit following price $p$ is $(1-\kappa - \tau)r_{0} + \tau p$ whenever $p \in [r_{0},B]$; any higher price delivers zero profit, and any price less than $r_{0}$ delivers lower profit given $x$ small. So, profit is maximized at $p=B$, and profit at this price is less than $r_{0}$ whenever:

\begin{equation*} 
\frac{\tau}{\kappa + \tau} < \frac{r_{0}}{B}.
\end{equation*}

\noindent But because the seller chooses $p=B$, the buyer obtains 0 surplus. Taking $\kappa = \kappa^{*} \varepsilon$ and $\tau=\tau^{*} \varepsilon$ to satisfy this inequality, a direct calculation shows $D_{KL}(H||G_{\varepsilon}) \rightarrow 0$ as $\varepsilon \rightarrow 0$. Since buyer surplus is minimized (being 0), this payoff pair is feasible; thus, for every $\eta > 0$, some point in this set achieves revenue lower than $r_{0}$ and buyer surplus 0, which is not achievable in $\mathcal{P}_{0}$. \end{proof}

\newpage 

\section{Online Supplement}

\begin{example}  \label{ex:nocanon}
This example illustrates why the replacement strategy of \cite{RS} cannot be used to prove the Theorem. Let $F=U[0,1]$ and $\eta=.1$. Let the set of signals be $\{.25, .45, .7, .9\}$, where $G(\{.25\})=.4, G(\{.45\})=.2, G(\{.7\})=.25, G(\{.9\})=.15$. I claim that a worst-case belief is $H=G$. First, calculating the seller revenue at each possible price (which is in the support of the buyer's posterior distribution), the seller obtains $.25$ by charging $.25$, $.27$ by charging $.45$, $.28$ by charging $.7$, and $.135$ by charging $.9$. Thus, it follows that $.7$ is uniquely optimal when $H=G$, and the buyer obtains payoff $.15(.9-.7)=.03$. 

Crucially, since $\eta=.1$, it is not possible for Nature to make $.9$ an optimal price. To do so, since the seller obtains $.25$ by charging $.25$, any such $H$ would require the seller to achieve at least this much at $p=.9$ and no more at any other price. To make $.9$ optimal, its revenue must weakly exceed the revenue at each of the three lower prices. At the KL-minimizing adjustment, all three of these revenue constraints bind, yielding $H(\{.25\})=4/9, H(\{.45\})=25/126, H(\{.7\})=5/63, H(\{.9\})=5/18$. But I compute $D_{KL}(H||G) \approx .125 > .1$; hence no admissible $H$ can make a price of $.9$ optimal. Since $H=G$ makes $.7$ optimal, it is therefore a worst-case belief. 

Since $H=G$ in this example, seller surplus according to either player is $.7(.4)=.28$. Attempting to follow the proof strategy of \cite{RS} would then require us to find a full-trade information structure achieving this surplus.\footnote{The property that $H=G$ is helpful since it is not important to take a stand whether this efficiency change should be evaluated according to the seller's belief or the buyer's belief.} Of course, their replacement works when $\eta=0$. That this is not possible when $\eta=0.1$ follows from the proof of the Theorem, which derives the necessary condition that (\ref{mainsuffeq}) holds for each $x \geq r$. So, a seller profit of no more than $.28$ is achievable under full trade only if for all $x \geq .28$: 

\begin{equation*} 
x \leq .28 + \frac{x^{2}}{2} + \int_{.28}^{x} I_{\eta}(.28/p) dp.
\end{equation*}

\noindent Numerically evaluating the integral at $x=.7$ yields $\int_{.28}^{x} I_{\eta}(.28/p) dp \leq .17$, so the right hand side is less than $.28+ \frac{.7^2}{2} + .17 =.695$. Hence, the inequality is violated and it is not possible to hold the seller to profit at most $.28$. \medskip
\end{example}

\noindent \textbf{The Proposition Evaluating Seller Profit Using Buyer Beliefs} The same construction from the proof of the Proposition identifies a payoff profile with disagreement that cannot arise without disagreement, even when evaluating seller profit with respect to $G$ instead of $H$. Let: 

\begin{equation*} 
G_{\varepsilon} = (1- \varepsilon)G^{RS} + \varepsilon \delta_{\mu} ~~~ \text{ and } ~~~   H_{t} = (1- t) G^{RS} + t \delta_{B}.
\end{equation*}

\noindent By construction, $B$ is uniquely optimal under $H_{t}$ and so buyer surplus is again 0. Seller surplus evaluated according to the buyer's distribution is $(1- \varepsilon) r_{0}$. Again, $D_{KL}(H_{t}||G_{\varepsilon}) \rightarrow 0$ as $t, \varepsilon \rightarrow 0$.

\medskip
\medskip

\noindent \textbf{Step One with $f$-divergences} Defining $m(s) = \frac{dH}{dG}(s)$, $\mathbb{P}_{H}[s \geq p]=\int_{p}^{1} m(s) dG(s)$, and hence also that $\mathbb{E}_{G}[m(s) \mid s \geq p] = \frac{1}{\mathbb{P}_{G}[s \geq p]} \int_{p}^{1} m(s)dG(s) =\frac{\mathbb{P}_{H}[s \geq p]}{\mathbb{P}_{G}[s \geq p]}  $. Similarly, $\mathbb{E}_{G}[m(s) \mid s < p]= \frac{\mathbb{P}_{H}[s < p]}{\mathbb{P}_{G}[s < p]}  $. Jensen's inequality since $f$ is convex then yields: 

\begin{equation*} 
\mathbb{E}_{G} \left[ f(m(s)) \mid s \geq p \right] \geq f\left(\frac{\mathbb{P}_{H}[s \geq p]}{\mathbb{P}_{G}[s \geq p]} \right) ~~ \text{ and } ~~ \mathbb{E}_{G} \left[ f(m(s)) \mid s < p \right] \geq f\left( \frac{\mathbb{P}_{H}[s < p]}{\mathbb{P}_{G}[s < p]}  \right).
\end{equation*}

\noindent Using these inequalities, I obtain the following: 

\begin{equation*} 
\mathbb{E}_{G}[f(m(s))] \geq \mathbb{P}_{G}[s \geq p]f\left(\frac{\mathbb{P}_{H}[s \geq p]}{\mathbb{P}_{G}[s \geq p]} \right) + \mathbb{P}_{G}[s < p]f\left( \frac{\mathbb{P}_{H}[s < p]}{\mathbb{P}_{G}[s < p]}  \right).
\end{equation*}

Hence I obtain the same lower bound as in Step One for KL divergence, and this bound is attained with equality by taking:
\[
m(s)=
\begin{cases}
\dfrac{\mathbb{P}_{H}[s\geq p]}{\mathbb{P}_{G}[s\geq p]}, & s\geq p,\\[2mm]
\dfrac{\mathbb{P}_{H}[s<p]}{\mathbb{P}_{G}[s<p]}, & s<p.
\end{cases}
\]
Thus, as before, it is without loss to change only the relative probability of signals above and below the price.

I also verify that under these conditions, the equation: 

\begin{equation*} 
\eta=I_{\eta}(x)f(x/I_{\eta}(x))+(1-I_{\eta}(x))f((1-x)/(1-I_{\eta}(x))),
\end{equation*}

\noindent has a unique solution for $I_{\eta}(x)$ less than or equal to $x$. At $I_{\eta}(x)=x$, the right hand side reduces to $f(1)=0$, while as $I_{\eta}(x) \rightarrow 0$, the right-hand side approaches $\infty$, since: 

\begin{equation*} 
I_{\eta}(x)f(x/I_{\eta}(x))=x \cdot \frac{f(x/I_{\eta}(x))}{x/I_{\eta}(x)},
\end{equation*}

\noindent and $f(t)/t \rightarrow \infty$ as $t \rightarrow \infty$. At $x=1$, the right hand side is finite since $f(0) < \infty$. Lastly, the derivative of the right hand side with respect to $x$ is $f'(x/y)-f'((1-x)/(1-y))$, which is positive for $y <x$ when $f$ is strictly convex. The derivative with respect to $y$ is $f(x/y)-(x/y)f'(x/y)-(f((1-x)/(1-y))-((1-x)/(1-y))f'((1-x)/(1-y)))$; but since: 

\begin{equation*} 
\frac{d}{dt}(f(t) - tf'(t))=-tf''(t), 
\end{equation*}

\noindent this derivative is negative when $f'' > 0$. It follows that there is a unique value of $I_{\eta}(x)$ for every $x$, and that $I_{\eta}(x)$ is also increasing. These properties are sufficient for the $c=0$ argument to generalize.

\newpage

\bibliographystyle{ecta}
\bibliography{SignalRobust}

\end{document}